\let\emptyset\varnothing
\newlength\marincrease
\newenvironment{Walgo}[2][htbp]
  {\renewcommand{\@algocf@start}{%
    \setlength\marincrease{#2}
  \@algoskip%
  \begin{lrbox}{\algocf@algobox}%
  \begin{minipage}{\dimexpr\textwidth+2\marincrease\relax}
  \setlength{\algowidth}{\hsize}%
  \vbox\bgroup
  \hbox to\algowidth\bgroup\hbox to \algomargin{\hfill}\vtop\bgroup%
  \ifthenelse{\boolean{algocf@slide}}{\parskip 0.5ex\color{black}}{}%
  \addtolength{\hsize}{-1.5\algomargin}%
  \let\@mathsemicolon=\;\def\;{\ifmmode\@mathsemicolon\else\@endalgoln\fi}%
  \raggedright\AlFnt{}%
  \ifthenelse{\boolean{algocf@slide}}{\IncMargin{\skipalgocfslide}}{}%
  \@algoinsideskip%
  }%
\renewcommand{\@algocf@finish}{%
  \@algoinsideskip%
  \egroup
  \hfill\egroup
  \ifthenelse{\boolean{algocf@slide}}{\DecMargin{\skipalgocfslide}}{}%
  \egroup
  \end{minipage}
  \end{lrbox}%
  \makebox[\linewidth][c]{\algocf@makethealgo}
  \@algoskip%
  \setlength{\hsize}{\algowidth}%
  \lineskip\normallineskip\setlength{\skiptotal}{\@defaultskiptotal}%
  \let\;=\@mathsemicolon%
  \let\]=\@emathdisplay%
}%
  \begin{algorithm}[#1]}
  {\end{algorithm}}
\definecolor{webgreen}{rgb}{0, 0.5, 0} 
\definecolor{webblue}{rgb}{0, 0, 0.5} 
\definecolor{webred}{rgb}{0.5, 0, 0} 
\definecolor{webblack}{rgb}{0, 0, 0} 
\let\old@@children\@@children
\def\@@children{\futurelet\my@next\my@@children}
\def\my@@children{%
\ifx\my@next\missing\else
\expandafter\@gobble
\fi
\expandafter\old@@children}
\newcommand{\missing}{ \edge[draw=none]; {} }
\providecommand{\floor}[1]{\left \lfloor #1 \right \rfloor }
\begin{document}
\renewcommand{\thesection}{\Roman{section}}

\newtheorem{theoremm}{Theorem}
\newtheorem{eqed}{Example}
\newtheorem {lemmaa}{Lemma}
\newtheorem {observation}[theoremm]{Observation}
\newtheorem {defnn}{Definition}
\newtheorem {corollaryy}{Corollary}
\newtheorem {conjecturee}[theoremm]{Conjecture}
\newtheorem {fact}[theoremm]{Fact}
\newtheorem {procd}{Procedure}
\newtheorem {rules}{Rule}
\newenvironment{example}{\begin{eqed} \rm}{\hfill\end{eqed}}
\newenvironment{proof}{\noindent {\bf Proof :\ } }{\hfill $\Box$ }
\newenvironment{lemma}{\begin{lemmaa} \sl}{\end{lemmaa}}
\newenvironment{theorem}{\begin{theoremm}{\bf :}\sl}{\end{theoremm}}
\newenvironment{corollary}{\begin{corollaryy}{\bf :}\sl}{\end{corollaryy}}
\newenvironment{procd1}{\begin{procd} \sl}{\end{procd}}
\newenvironment{conjecture}{\begin{conjecturee} \sl}{\end{conjecturee}}
\newenvironment{definition}[1][Definition]{\begin{defnn} \sl}{\end{defnn}}

\title{Reversibility of $d$-State Finite Cellular Automata (Draft Version)\thanks{Copyright of this paper is the property of Old City Publishing. Please cite this paper as :  
\newline Kamalika Bhattacharjee and Sukanta Das. Reversibility of $d$-state finite cellular automata. \textit{Journal of Cellular Automata}, 11(2-3):213-245, 2016.}}

\author{Kamalika ~Bhattacharjee\email{kamalika.it@gmail.com}
\and Sukanta ~Das \email{sukanta@it.iiests.ac.in}
}

\institute{Department of Information Technology, Indian Institute of Engineering Science and Technology, Shibpur, West Bengal, India 711103}

\maketitle

\begin{abstract}
This paper investigates reversibility properties of $1$-dimensional $3$-neighborhood $d$-state finite cellular automata (CAs) of length $n$ under periodic boundary condition. A tool named {\em reachability tree} has been developed from de Bruijn graph which represents all possible reachable configurations of an $n$-cell CA. This tool has been used to test reversibility of CAs. We have identified a large set of reversible CAs using this tool by following some  greedy strategies. 
 
\end{abstract}

\keywords{$d$-state cellular automata (CAs), de Bruijn graph, Reachability Tree, Rule Min Term (RMT), reversibility                                                    
}

\section{Introduction}
\label{intro}

The {\em reversibility} property of a cellular automaton (CA) refers to that every configuration of the CA has only one predecessor. That is, the reversible cellular automata (CAs) are injective CAs where the configurations follow one-to-one relationship \cite{JKThCA,Toffo90}. Since late 1960s, the reversibility of CAs has been a point of attraction of many researchers, and a number of works have been carried out in this area, see e.g. \cite{hedlund69,Richa72,Amoroso72,nasu1977local,Maruoka197947,Maruoka1982269,sato77}. The reversible CAs have been utilized in different domains, like simulation of natural phenomenon \cite{hartman90}, cryptography \cite{Wolfr86b,ppc1,MartRey2005}, pattern generations \cite{doi:10.1142/S0218001494000280,Kari2012180}, pseudo-random number generation \cite{Wolfr86c,aspdac04}, recognition of languages \cite{Kutrib20081142} etc.

The study on reversibility of CAs was started with Hedlund \cite{hedlund69} and Richardson \cite{Richa72}. In their seminal paper, Amoroso and Patt provided efficient algorithms to decide whether a one-dimensional CA, defined by a local map $f$, is reversible or not \cite{Amoroso72}. It was later shown that it is not possible to design an efficient algorithm that tests reversibility of an arbitrary CA, defined over two or more dimensional lattice \cite{Kari1990379}. However, the research on one-dimensional reversible CAs was continued \cite{di1975reversibility,morita1995reversible,culik1987invertible,tome1994necessary,soto2008computation}. A decision algorithm for CAs with finite configurations was given in \cite{di1975reversibility}. Other variants are given in \cite{moraal2000graph,MoraMM06}. An elegant scheme based on de Bruijn graph to decide whether a one dimensional CA is reversible is presented in \cite{suttner91}. These works, however, deal with infinite lattice. It may be mentioned here that, finite CAs are the interest of researchers, when they are targeted to solve some real-life problems. 

While studying the reversibility (i.e. injectivity) of infinite and finite CAs, one can identify (at least) the following four cases.
\begin{enumerate}
\item An infinite CA whose global function is injective on the set of ``all infinite configurations''.
\item An infinite CA whose global function is injective on the set of ``all {\em periodic} infinite configurations''. In one-dimension, a configuration $x$ is periodic, or more precisely, spatially periodic if there exists $p \in \mathbb{N}$ such that $x_{i+p}=x_i$ for all $i\in \mathbb{Z}$.
\item A finite CA whose global function is injective on the set of ``all finite configurations of length $n$'' for all $n\in \mathbb{N}$.
\item A finite CA whose global function is injective on the set of ``all finite configurations of length $n$'' for a fixed $n$.
\end{enumerate}
However, the periodic configurations are often referred to as periodic boundary conditions on a finite CA \cite{JKThCA}. According to the definitions of periodic configuration and finite CA (under periodic boundary condition), therefore, case $2$ and case $3$ are equivalent. It is also known that case $1$ and case $2$ are equivalent for one-dimensional CAs \cite{JKThCA,sato77}. Hence, in one-dimension, cases $1$, $2$ and $3$ are equivalent, and the case $4$ is different from them. So, the algorithms of \cite{Amoroso72} and \cite{suttner91}, which are the decision procedures for case $1$, can decide the one-dimensional CAs of cases $1$, $2$ and $3$. This paper deals with case $4$, and reports an algorithm to decide whether a finite one-dimensional CA under periodic boundary condition having a fixed cell length is reversible or not.

Reversibility of finite one-dimensional CAs has also been previously tackled \cite{zubeyir11,Acri06,entcs/DasS09,AMdR11,Ino05}. However, most of the works consider only binary CAs, where the local map $f$ is linear \cite{ppc1,zubeyir11,ito1983linear}. The reason of choosing the linear CAs is, standard algebraic techniques can be used to characterize them. Moreover, the most of the CAs are binary \cite{Soumya2011,entcs/DasS09,Soumya2010}. In this work, we consider one-dimensional $3$-neighborhood (that is, nearest neighbor) CA with $d$ number of states per cell ($d\geq2$). As is well-known after Smith, a CA with higher neighborhood dependency can always be emulated by another CA with lesser, say $3$-neighborhood dependency \cite{Smith71}. Hereafter, by ``CA'', we will mean one-dimensional $3$-neighborhood finite CA having fixed cell length with $d$ states per cell ($d \geq 2$).

In this paper, we first develop a characterization tool which is named as \emph{Reachability Tree} (Section~\ref{rtree}). This tool is instrumental in developing theories for finite CAs. We identify the properties of reachability tree when it presents a reversible CA (Section~\ref{rev}). Exploring these properties, we develop an algorithm to test reversibility of a finite CA with a particular cell length $n$ (Section~\ref{bij}). We finally report three greedy strategies to get a set of reversible finite CAs (Section~\ref{identify}).

\section{Definitions}
\label{CAbasic}

In this work, we consider one-dimensional $3$-neighborhood CAs with periodic boundary condition where cells of the CA form a ring $\mathscr{L} = \mathbb{Z}/n\mathbb{Z}$, $n$ is the length of the CA. That is, the CAs are finite. Each cell of such an \emph{$n$-cell} CA can use a set of states $S = \{0,1, \cdots, d-1\}$. The next state of each cell is determined by a local rule $f: S^3 \rightarrow S$. A \emph{configuration} $C:\mathscr{L} \rightarrow S$ is a mapping that specifies the states of all cells.

According to their global behavior, CAs can be classified as reversible and irreversible. In a reversible CA, each configuration has exactly one predecessor. On the other hand, in an irreversible CA, there are some configurations which are not reachable (\emph{non-reachable} configurations) from any other configurations, and some configurations which are having more than one predecessors.

In this paper, we study the reversibility of finite CAs having $n$ number of cells. We consider here $n \geq 3$, as $n=1$ and $n=2$ are the trivial cases for $3$-neighborhood CAs.

To understand global behavior of CAs, a mathematical tool, named de Bruijn graph, is used by various researchers \cite{suttner91,Mora2008,soto2008computation}. An $ m $-dimensional de Bruijn graph of $ k $ symbols is a directed and edge-labelled graph representing overlaps between sequences of symbols. 

In general, the de Bruijn graph $(k,m)$, where $k$ is the number of symbols and $m$ is the dimension, has $ k^{m} $ vertices and $ k^{m+1} $ edges. 
The graph is balanced in the sense that each vertex has both in-degree and out-degree $k$ \cite{debruijn}. The de Bruijn graph can be exploited to decide whether a given CA is reversible \cite{suttner91}.

A CA, defined by a local rule $f$, can be expressed as a de Bruijn graph of dimension $N-1$, where $N$ is neighborhood size ($= 3$ in our case) over $k = |S|$ symbols. So, if $S = \{0,1,2\} $, the graph will have $3^{2} = 9$ vertices and $3^{3} = 27$ edges. Each edge is labelled with $xyz/v$ where $xyz$ represents a sequence of $3$ symbols from $S$ which comes from the overlap of labels of the two nodes of that directed edge and $v$ is the next state value for that edge of the rule defined by $f$.

\begin{figure*}[hbt]
\centering
\includegraphics[width= 3.8in, height = 3.0in]{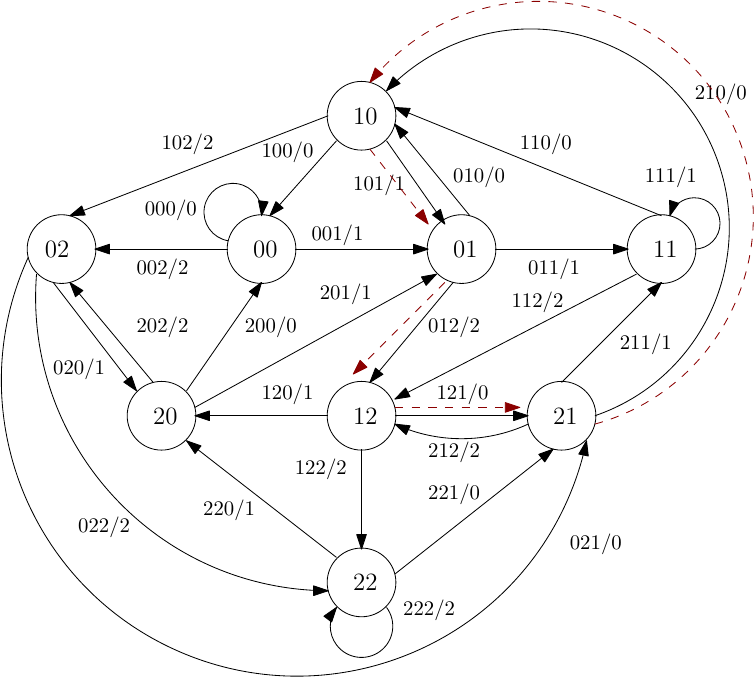}
\caption{The de Bruijn Graph of CA $201210210201210210201210210$}
\label{fig:dbg_3_state}
\end{figure*}

Figure~\ref{fig:dbg_3_state} represents a $3$-state CA. The graph shows that if the left, self and right neighbors of a cell are all $0$s, then next state of the cell (that is, $f(0,0,0)$) is $0$, if the neighbors are $0, 0$ and $1$ respectively, the next state is $1$, and so on. The rules can also be expressed by a tabular form. Table~\ref{rt3} represents the rule of Figure~\ref{fig:dbg_3_state} (rule of $2^{nd}$ row). Note that, the table has an entry for each value of $xyz$. In this work, however, we refer each of the edge label $xyz$ as \textit{Rule Min Term (RMT)} because this representation can be viewed as \textit{Min Term} of three variable \emph{Switching function}. For our convenience, we generally represent RMTs by their corresponding decimal equivalents.

\begin{definition}
\label{def2}
The combination of neighborhood $x, y, z$ with respect to the value $f(x,y,z)$, where $f: S^3 \rightarrow S$ is the local rule of a CA, is called Rule Min Term (RMT). Each RMT is associated to a number $r = x \times d^2 + y \times d + z$. We denote the value $f(x,y,z)$ by $f[r]$.

\end{definition}

\begin{table}[h]
\setlength{\tabcolsep}{1.3pt}
\begin{center}
\caption{Rules of $3$-state CAs. Here, PS and NS stands for present state and next state respectively}
\label{rt3}
\resizebox{1.00\textwidth}{!}{
\begin{tabular}{cccccccccccccccccccccccccccc}
 \toprule
\thead{P.S.} & \thead{222} & \thead{221} & \thead{220} & \thead{212} & \thead{211} & \thead{210} & \thead{202} & \thead{201} & \thead{200} & \thead{122} & \thead{121} & \thead{120} & \thead{112} & \thead{111} & \thead{110} & \thead{102} & \thead{101} & \thead{100} & \thead{022} & \thead{021} & \thead{020} & \thead{012} & \thead{011} & \thead{010} & \thead{002} & \thead{001} & \thead{000}\\ 

\thead{RMT} & \thead{(26)} & \thead{(25)} & \thead{(24)} & \thead{(23)} & \thead{(22)} & \thead{(21)} & \thead{(20)} & \thead{(19)} & \thead{(18)} & \thead{(17)} & \thead{(16)} & \thead{(15)} & \thead{(14)} & \thead{(13)} & \thead{(12)} & \thead{(11)} & \thead{(10)} & \thead{(9)} & \thead{(8)} & \thead{(7)} & \thead{(6)} & \thead{(5)} & \thead{(4)} & \thead{(3)} & \thead{(2)} & \thead{(1)} & \thead{(0)}\\ 
 \midrule
\multirow{3}{*}{}
 &2&0&1&2&1&0&2&1&0&2&0&1&2&1&0&2&1&0&2&0&1&2&1&0&2&1&0\\
& 2&0&1&0&1&2&2&1&0&2&0&1&0&1&2&2&1&0&2&0&1&0&1&2&2&1&0\\
\thead{N.S.} & 0&2&0&1&2&0&1&1&2&1&2&2&0&1&0&1&2&0&1&1&0&1&2&2&0&2&0\\
 & 1&0&2&2&2&1&0&1&0&1&0&2&2&2&1&0&1&0&1&0&2&2&2&1&0&1&0\\
 & 1&1&2&2&2&1&0&1&0&1&1&2&2&2&1&0&0&0&1&1&2&2&2&1&0&0&0\\
\bottomrule
\end{tabular}
}
\end{center}
\end{table}

Therefore, the number of RMTs of a $d$-state CA rule is $d^3$. We represent a rule by the values of $f[r]$ with $f[0]$ as the right most digit. Now, if the graph of Figure~\ref{fig:dbg_3_state} is observed, it can be seen that each node has $3$ incoming edges and $3$ outgoing edges. In general, a node of the de Bruijn graph of a $d$-state CA has $d$ incoming edges and $d$ outgoing edges. Therefore, the set of incoming RMTs (resp. outgoing RMTs) are related to each other. Note that in Figure~\ref{fig:dbg_3_state}, last (resp. first) $2$ digits of any set of incoming RMTs (resp. outgoing RMTs) are same. We call the set of incoming RMTs as \emph{equivalent} RMTs, and the set of outgoing RMTs as \emph{sibling} RMTs.

\begin{definition}
\label{def3}
A set of $d$ RMTs $r_1, r_2, ..., r_d$ of a $d$-state CA rule are said to be equivalent to each other if $r_1 d \equiv r_2 d \equiv ... \equiv r_d d \pmod{ d^3}$.

\end{definition}

\begin{definition}
\label{def4}
A set of $d$ RMTs $s_1, s_2, ..., s_d$ of a $d$-state CA rule are said to be sibling to each other if $\floor{\frac{s_1}{d}} = \floor{\frac{s_2}{d}} = ... = \floor{\frac{s_d}{d}}$.

\end{definition}

The rationale behind choosing the name \emph{equivalent} is - if one traverses the de Bruijn graph of a $d$-state CA, then a node can be reached through any one of the $d$ incoming edges, hence all edges are equivalent with respect to the reachability of the node. On the other hand, after reaching a node, one can keep on traversing the graph by selecting any of the outgoing edges, to which we name \emph{sibling}, because they are coming out from the same mother node.

We represent $Equi_i$ as a set of RMTs that contains RMT $i$ and all of its equivalent RMTs. That is, $Equi_i = \{i, d^2+i, 2d^2+i, \cdots, (d-1)d^2+i \}$, where $0 \leq i \leq d^2-1$. Similarly, $Sibl_j$ represents a set of sibling RMTs where $Sibl_j = \{d.j, d.j+1, \cdots, d.j+d-1\}$ $(0\leq j \leq d^2-1)$.  However, one can observe an interesting relation among RMTs during traversal of the graph. In Figure~\ref{fig:dbg_3_state}, if RMT $1$, (or RMT $10$ or RMT $19$) is used to visit a node, then either RMT $3$ or RMT $4$ or RMT $5$ is to be used to proceed further traversal. Table~\ref{rln} shows the relationship among the RMTs of $3$-state CAs. In general, if RMT $r \in Equi_i$ of a $d$-state CA is used to reach a node, then the next RMT to be chosen is $s \in Sibl_i$.

\begin{table}[hbtp]
\begin{center}
\caption{Relations among the RMTs for $3$-State CA}
\label{rln}
\resizebox{0.95\textwidth}{!}{
\begin{tabular}{|c|c|c|c|c|c|}
\hline 
\multicolumn{3}{|c|}{Incoming} & \multicolumn{3}{c|}{Outgoing}\\
\hline
\#Set & Equivalent RMTs & Decimal Equivalent & \#Set & Sibling RMTs & Decimal Equivalent \\ 
\hline 
$Equi_0$ & 000, 100, 200 & 0, 9, 18 & $Sibl_0$ & 000, 001, 002 & 0, 1, 2 \\ 
\hline 
$Equi_1$ & 001, 101, 201 & 1, 10, 19 & $Sibl_1$ & 010, 011, 012 & 3, 4, 5 \\ 
\hline 
$Equi_2$ & 002, 102, 202 & 2, 11, 20 & $Sibl_2$ & 020, 021, 022 & 6, 7, 8 \\ 
\hline 
$Equi_3$ & 010, 110, 210 & 3, 12, 21 & $Sibl_3$ & 100, 101, 102 & 9, 10, 11 \\ 
\hline 
$Equi_4$ & 011, 111, 211 & 4, 13, 22 & $Sibl_4$ & 110, 111, 112 & 12, 13, 14 \\ 
\hline 
$Equi_5$ & 012, 112, 212 & 5, 14, 23 & $Sibl_5$ & 120, 121, 122 & 15, 16, 17 \\ 
\hline 
$Equi_6$ & 020, 120, 220 & 6, 15, 24 & $Sibl_6$ & 200, 201, 202 & 18, 19, 20 \\ 
\hline 
$Equi_7$ & 021, 121, 221 & 7, 16, 25 & $Sibl_7$ & 210, 211, 212 & 21, 22, 23 \\ 
\hline 
$Equi_8$ & 022, 122, 222 & 8, 17, 26 & $Sibl_8$ & 220, 221, 222 & 24, 25, 26 \\ 
\hline
\end{tabular}
}
\end{center}
\end{table} 

The next configuration of a given configuration can also be found by traversing the de Bruijn graph. Following example illustrates this.

\begin{example}
\label{ex}
Let us take the configuration $1012$ of the $4$-cell CA of Figure~\ref{fig:dbg_3_state}. To get the next configuration of $1012$, we form a $2$-digit overlapping window and start from node $10$ as the first two digits of $1012$ are $10$. From node $10$, we use edge $101$ and go to node $01$, then from it following edge $012$, we go to node $12$; from node $12$, we go to node $21$ by the edge $121$ and finally, from node $21$, we come back to our starting node $10$ by the edge $210$. For each of the edges we traverse, we get a next state value. By these next states, we get the next configuration as $1200$. 
The traversal is shown by dotted arrow in Figure~\ref{fig:dbg_3_state}.
\end{example}

\section{The Reachability Tree}
\label{rtree}

In this section, we develop a discreet tool, we call it \emph{Reachability tree}, for an $n$-cell $d$-state CA ($n \geq 3$). The tree enables us to efficiently decide whether a given $n$-cell CA is reversible or not. Moreover, it guides us to identify reversible CAs. Reachability tree was initially proposed for binary CAs \cite{entcs/DasS09}, which is generalized here for $d$-state CAs.

To test reversibility of a CA, de Bruijn graph may be utilized. In \cite{suttner91}, a scheme based on de Bruijn graph was developed to test reversibility of CAs with infinite lattice size. However, for finite CAs, a straight forward scheme of testing reversibility can be developed - consider each of the possible configurations, find next configuration using de Bruijn graph. If each configuration is reachable and has only one predecessor, declare the CA as reversible.

Finding of next configuration of a given configuration using de Bruijn graph is simple, and can be done in $\mathcal{O}(n)$ time, where $n$ is the size of the configuration (see Example~\ref{ex}). However, finding of the next configuration of all possible configurations of an $n$-cell CA is an issue. The de Bruijn graph does not directly give any information about the existence of non-reachable configurations.

Reachability tree, on the other hand, depicts the reachable configurations of an $n$-cell CA. Non-reachable configurations can be directly identified from the tree. The tree has $n+1$ levels, and like de Bruijn graph, edges are labeled. However, here the labels generally contain more than one RMT. A sequence of edges from root to leaf represents a reachable configuration, where each edge represents a cell's state.

\begin{definition} \label{tree}
Reachability tree of an $n$-cell $d$-state CA is a rooted and edge-labeled $d$-ary tree with $(n+1)$ levels where 
each node $ N_{i.j} ~ (0 \leq i \leq n,~ 0 \leq j \leq d^{i}-1)$ is an ordered list of $d^2$ sets of RMTs, and the root $N_{0.0}$ is the ordered list of all sets of sibling RMTs. We denote the edges between $N_{i.j} ~ (0 \leq i \leq n-1,~ 0 \leq j \leq d^{i}-1)$ and its possible $d$ children as $E_{i.dj+m} = ( N_{i.j}, N_{i+1.dj+m}, l_{i.dj+m} )$ where $l_{i.dj+m}$ is the label of the edge and $0 \leq m \leq d-1$. Like nodes, the labels are also ordered list of $d^2$ sets of RMTs. Let us consider that ${\Gamma_{p}}^{N_{i.j}}$ is the $p^{th}$ set of the node $N_{i.j}$, and ${\Gamma_{q}}^{E_{i.dj+m}}$ is the $q^{th}$ set of the label on edge $E_{i.dj+m}$ $(0 \leq p,q \leq d^2 -1)$, So,  $N_{i.j} = ( {\Gamma_{p}}^{N_{i.j}})_{0 \leq p \leq d^2-1}$ and $l_{i.dj+m} =  ( {\Gamma_{q}}^{E_{i.dj+m}})_{0 \leq q \leq d^2-1}$. Following are the relations which exist in the tree :

\begin{enumerate}
\item \label{rtd1} [For root] $N_{0.0} = ({\Gamma_{k}}^{N_{0.0}})_{0 \leq k \leq d^2-1}$, where ${\Gamma_{k}}^{N_{0.0}} = Sibl_k$.

\item \label{rtd2} $\forall r \in {\Gamma_{k}}^{N_{i.j}}, ~ r$ is included in ${\Gamma_{k}}^{E_{i.dj +m}}$, if $f[r] = m, m \in \lbrace 0, 1, \cdots, \\d-1 \rbrace$, where $f$ is the rule of the CA. That means, $ {\Gamma_{k}}^{N_{i.j}} = {\bigcup}_m {\Gamma_{k}}^{E_{i.dj+m}}$ $(0 \leq k \leq d^2-1)$.

\item \label{rtd4}$\forall r$, if $r \in {\Gamma_{k}}^{E_{i.dj+m}}$, then RMTs $d.r \pmod{d^3}, d.r+1 \pmod{d^3}, \cdots,\\ d.r+(d-1) \pmod{d^3} $ are in ${\Gamma_{k}}^{N_{i+1.dj+m}}$ $(0\leq m \leq d-1)$.

\item \label{rtd5} [For level $n-2$] ${\Gamma_{k}^{N_{n-2.j}}} = \lbrace s ~|~$ if $ r \in {\Gamma_{k}^{E_{n-3.j}}} $ then $ s \in \lbrace d.r \pmod{d^3}, d.r+1 \pmod{d^3}, \cdots, d.r+(d-1) \pmod{d^3}\rbrace \cap \lbrace i, i+d, \cdots, i+(d^2-1)d \rbrace \rbrace$ ($i= \floor{\frac{k}{d}}, 0 \leq k \leq d^2-1, 0 \leq j \leq d^{n-2}-1 $).

\item \label{rtd6} [For level $n-1$] ${\Gamma_{k}}^{N_{n-1.j}} = \lbrace s ~|~$ if $ r \in {\Gamma^{E_{n-2.j}}}_{k} $ then $ s \in \lbrace d.r \pmod{d^3}, d.r+1 \pmod{d^3}, \cdots, d.r+(d-1) \pmod{d^3}\rbrace \cap \lbrace k + i \times d^2 ~|~ 0 \leq i \leq d-1 \rbrace \rbrace, { 0 \leq k \leq d^2-1}$.

\end{enumerate}

\end{definition}

Note that, the nodes of level $n-2$ and $n-1$ are different from other intermediate nodes (Points~\ref{rtd5} and \ref{rtd6} of Definition~\ref{tree}). Only a subset of selective RMTs can play as $ {\Gamma_{k}}^{N_{i.j}}$ in a node $ {N_{i.j}}, (0\leq k \leq d^2-1, 0 \leq j \leq d^i-1)$ when $i = n-2$ or $n-1$. In fact, only $\frac{1}{d}$ of the RMTs that are possible for a node following Point~\ref{rtd4} of Definition~\ref{tree}, can exist for any node at level $n-2$ or level $n-1$, if we apply Points~\ref{rtd5} and \ref{rtd6} of Definition~\ref{tree} on the node. Finally, we get the leaves with ${\Gamma_{k}}^{N_{n.j}}$, where ${\Gamma_{k}}^{N_{n.j}}$ is either empty or a set of sibling RMTs. Note that, ${\Gamma_{k}}^{N_{0.0}}$ is a set of sibling RMTs (Point~\ref{rtd1} of Definition~\ref{tree}) and ${\Gamma_{k}}^{N_{0.0}} = \bigcup _{j} \Gamma_k^{N_{n.j}} $. However, in our further discussion we shall not explicitly define $i$ and $j$ of node $N_{i.j}$ or edge $E_{i.j}$ if they are clear from the context.

\begin{example}

Reachability tree of a $4$-cell CA with $3$ states per cell is shown in Figure~\ref{fig:rt2}. As it is of $3$ states, a node $ N_{i,j} $ can have at most $3$ children - $ N_{i+1.3j} $, $ N_{i+1.3j+1} $ and $ N_{i+1.3j+2} $. Hence, maximum number of nodes possible in the tree for a $4$-cell $3$-state CA is $ \frac{3^{4+1} - 1}{3-1} = 121$. Figure~\ref{fig:rt2}, however, contains $105$ nodes. The root of the tree is $N_{0.0} = (\Gamma_k^{N_{0.0}})_{0 \leq k \leq 8}$, where $\Gamma_k^{N_{0.0}}$ is a set of sibling RMTs $(Sibl_k)$. That means, $N_{0.0} = (\{0,1,2\}, \{3,4,5\}, \{6,7,8\}, \{9,\\10,11\}, \{12,13,14\}, \{15,16,17\}, \{18,19,20\}, \{21,22,23\}, \{24,25,26\})$.\\ Note that the root is independent of CA rule (it depends only on $d$, the number of states per cell), whereas other nodes are rule dependent. Here, ${l_{0.0}}$, i.e., the label of ${E_{0.0}}$ is $(\{0\}, \{5\}, \{7\}, \{9\}, \{14\}, \{16\}, \{18\}, \{23\}, \{25\})$ and the corresponding child ${N_{1.0}}$ is $( \{0,1,2\}, \{15, 16, 17\}, \{21,22,23\}, \{0,1,2\}$, \\$\{15, 16, 17\}, \{21,22,23\}, \{0,1,2\}, \{15, 16, 17\}, \{21,22,23\})$. However, an arbitrary RMT can not be a part of nodes of level $n-2$ and $n-1$. For example, ${l_{1.1}} = (\{1\}, \{15\}, \{22\}, \{1\}, \{15\}, \{22\}, \{1\}, \{15\}, \{22\})$, but the corresponding node ${N_{2.1}}$ (that is, $N_{n-2.1}$, since $n=4$) is $(\{3\}, \{18\}, \{12\}, \{4\}, \\\{19\}, \{13\}, \{5\}, \{20\}, \{14\}) $. Observe that, $\Gamma_1^{N_{2.1}}$ of the node ${N_{2.1}}$ of Figure~\ref{fig:rt2} is $\{18\}$. If we follow point~\ref{rtd4} of Definition~\ref{tree}, then RMTs $19$ and $ 20$ should also be part of $\Gamma_1^{N_{2.1}}$. But, they could not, because the node is at level $n-2$ (Point~\ref{rtd5} of Definition~\ref{tree}). Similarly at level $3$, $l_{2.3} = (\emptyset, \{18\}, \emptyset, \emptyset, \emptyset, \emptyset, \{5\}, \emptyset, \{14\})$ and $N_{3.3}= (\emptyset, \{1\}, \emptyset, \emptyset, \emptyset, \emptyset, \{15\}, \\\emptyset, \{17\})$ (Point~\ref{rtd6}). That means, for node $N_{3.3}$, in the set ${\Gamma_{1}}^{N_{3.3}}$ only RMT $1$ is present. So, only $\frac{1}{3}$ of the possible RMTs can be part of any set at level $n-2$ or $n-1$.

\begin{sidewaysfigure}[hbtp]
\thisfloatpagestyle{empty}
   \centering
    \includegraphics[width= 8.5in, height = 3.5in]{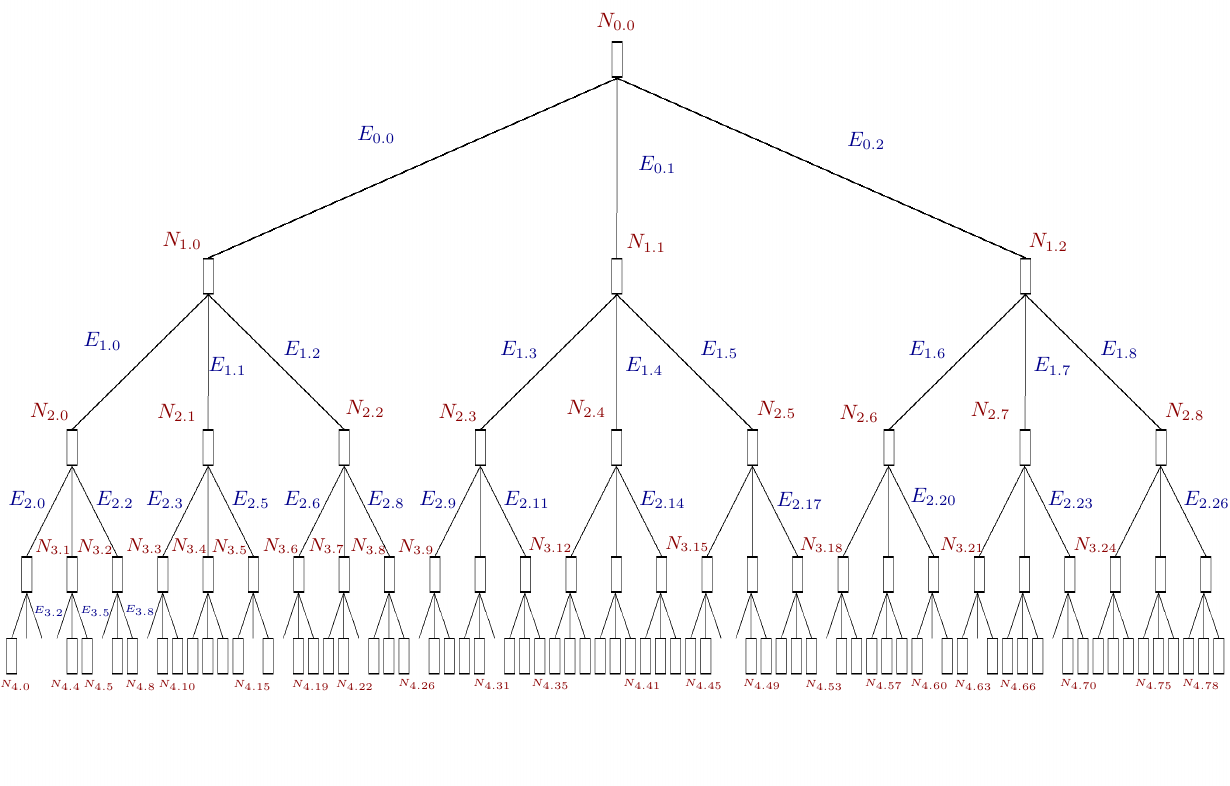} 
   \caption{Reachability tree of $4$-cell $3$-State CA with rule $201012210201012210201012210$}
    \label{fig:rt2}
\end{sidewaysfigure}

\end{example}

Reachability tree gives us information about reachable configurations of the CA. However, some nodes in a reachability tree may not be present, which we call non-reachable nodes, and the corresponding missing edges as non-reachable edges. No RMT is present in a non-reachable node or in the label of a non-reachable edge.

\begin{definition}
A node $N_{i.j}$ is non-reachable if $\bigcup _{0 \leq k \leq d^2 - 1 } \Gamma_k^{N_{i.j}} = \emptyset$. Similarly, an edge $E_{i.j}$ is non-reachable if $\bigcup _{0 \leq k \leq d^2 - 1 } \Gamma_k^{E_{i.j}} = \emptyset$.
\end{definition}

The edges of the tree associate the states of CA cells, and a sequence of edges from the root to a leaf represents a reachable configuration. Since $d$ number of edges can come out from a parent node, we call the left most edge as $0$-edge which represents state $0$, second left most edge as $1$-edge which represents state $1$, and so on. The right most edge represents state $d-1$.

\begin{definition}
An edge $E_{i.j}$ is called $m$-edge if $f[r] = m $, $ r \in \bigcup _{0 \leq k \leq d^2 - 1 } \Gamma_k^{E_{i.j}}$, where $ 0\leq m \leq d-1$, and $f$ is the CA rule.
\end{definition}

Therefore, the sequence of edges $\langle E_{0.j_1}, E_{1.j_2}, ...,  E_{n-1.j_n}\rangle$, where $0 \leq j_i \leq d^i -1, 1 \leq i \leq n$, represents a reachable configuration.

\begin{example}

In Figure~\ref{fig:rt2}, there are some non-reachable edges - $E_{3.1}, E_{3.2}, E_{3.3},\\ E_{3.6}$ etc. of which the labels are empty. Corresponding nodes $N_{3.1}, N_{3.2}, N_{3.3}$, $N_{3.6}$ etc. are non-reachable nodes. However, the sequence $\langle E_{0.0}, E_{1.1}, E_{2.3}$, $ E_{3.11} \rangle$, represents the reachable configuration $0102$. 

\end{example}

\section{Reachability Tree and reversible CA}
\label{rev}

This section studies the reachability tree of reversible CAs. These studies are utilized in Section~\ref{bij} and Section~\ref{identify}.

\begin{definition}
\label{balancedrule}
A rule is \textbf{balanced} if it contains equal number of RMTs for each of the $d$ states possible for that CA; otherwise it is an \textbf{unbalanced} rule.

\end{definition}

\begin{example}
Rule $201210210201210210201210210$ is balanced, because the rule contains nine $0$s, nine $1$s and nine $2$s.
\end{example}

\begin{theorem}
\label{revth1}
The reachability tree of a finite reversible CA of length $n$ $(n \geq 3)$ is complete.
\end{theorem}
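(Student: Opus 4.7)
The plan is to argue by a counting/contrapositive argument, using the fact that the reachability tree encodes exactly the set of reachable configurations as its root-to-leaf paths. I would begin by recalling from Section~\ref{rtree} that each root-to-leaf sequence of edges $\langle E_{0.j_1}, E_{1.j_2}, \ldots, E_{n-1.j_n} \rangle$ in the tree corresponds to a configuration of the $n$-cell CA that arises as the successor of some configuration, i.e., a reachable configuration. Conversely, every reachable configuration is represented by such a sequence, because its predecessor induces a valid traversal of consistent RMTs through the $n$ levels. Hence the number of complete root-to-leaf paths in the tree equals the number of reachable configurations.

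Next, I would observe the combinatorial bound: a $d$-ary tree of depth $n$ has at most $d^n$ leaves, with equality if and only if every internal node at levels $0, 1, \ldots, n-1$ has all $d$ children present (equivalently, no non-reachable edge occurs). So completeness of the tree is equivalent to having exactly $d^n$ reachable configurations.

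The key step is then to invoke reversibility. By the definition recalled in Section~\ref{CAbasic}, in a reversible CA every configuration has a unique predecessor; in particular, every one of the $d^n$ possible configurations of the $n$-cell CA is reachable. Therefore the tree must contain $d^n$ complete paths, forcing every internal node to have its full complement of $d$ children, and no non-reachable edge or node can occur. This is exactly completeness, which establishes the theorem.

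I do not foresee a serious obstacle here; the statement is essentially a restatement of reversibility in the language of the reachability tree. The only point requiring a small amount of care is the correspondence between reachable configurations and root-to-leaf paths: one must verify that the construction rules of Definition~\ref{tree} (especially the restrictions on levels $n-2$ and $n-1$ from points~\ref{rtd5} and \ref{rtd6}) are precisely what is needed to ensure that each surviving leaf corresponds to a configuration whose predecessor wraps around consistently under the periodic boundary. Once this correspondence is in place, the argument is immediate.
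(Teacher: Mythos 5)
Your proposal is correct and follows essentially the same route as the paper's proof: both argue that reversibility forces all $d^n$ configurations to be reachable, so the tree has $d^n$ root-to-leaf paths and, being a $d$-ary tree with $n+1$ levels, must be complete. Your added remarks on the path--configuration correspondence only spell out what the paper takes as given from the tree's definition.
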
 

\begin{proof}Since all the configurations of a reversible CA are reachable, the number of leaves in the reachability tree of an $n$-cell $d$-state CA is $d^n$ (number of configurations). Therefore, the tree is complete as it is a $d$-ary tree of $(n + 1)$ levels.
\end{proof}

The above theorem points to the fact that the identification of a reversible CA can be done by constructing the reachability tree of the CA. If there is no non-reachable edge in the reachability tree, then the CA is reversible. Following theorem further characterizes the reachability tree of a reversible CA.

\begin{theorem}
\label{revth2}
The reachability tree of a $d$-state finite CA of length $n$ $ (n \geq 3)$ is complete if and only if
\begin{enumerate}[i)]

\item \label{c1} The label $l_{n-1.j}$, for any $j$, contains only one RMT, \\ that is, $ \mid \bigcup_{0 \leq k \leq d^2 -1} {\Gamma_{k}}^{E_{n-1.j}}\mid = 1$.

\item \label{c2}  The label $l_{n-2.j}$, for any $j$, contains only $d$ RMTs, \\ that is, $ \mid \bigcup_{0 \leq k \leq d^2 -1} {\Gamma_{k}}^{E_{n-2.j}}\mid = d$.

\item \label{c3} Each other label $l_{i.j}$ contains $d^2$ RMTs, \\ that is, $ \mid \bigcup_{0 \leq k \leq d^2 -1} {\Gamma_{k}}^{E_{i.j}}\mid = d^2$, where $ 0 \leq i \leq n-3$.

\end{enumerate}
\end{theorem}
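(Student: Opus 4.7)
The statement is an iff, so I plan to prove both implications, and I expect the two directions to be very uneven in difficulty. The ``if'' direction (the three cardinality conditions imply completeness) is essentially immediate from Definition~\ref{tree}: if every label $l_{i.j}$ has $|\bigcup_k \Gamma_k^{E_{i.j}}| \geq 1$, then no edge of the tree is non-reachable, so every internal node has all $d$ children, and the $d$-ary tree of $n+1$ levels is complete. The content lies in the converse, which I plan to handle level by level because the three cardinalities $1$, $d$, $d^2$ correspond to three distinct regimes: the two bottom levels are constrained by the wraparound (points~5 and~6 of Definition~\ref{tree}), while the intermediate levels are governed only by the expansion rule and the balance of the rule $f$.

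For the converse, I assume the tree is complete, so all $d^n$ configurations are reachable, and I plan to establish conditions (i)--(iii) in that order. For~(i), the key observation is that $\Gamma_k^{N_{n-1.j}} \subseteq Equi_k$ by point~6, and the $Equi_k$'s are disjoint (they partition the $d^3$ RMTs by $r \bmod d^2$), so the sets $\Gamma_k^{E_{n-1.j}}$ are pairwise disjoint across $k$. Combined with the leaf identity $\bigcup_{j'} \Gamma_k^{N_{n.j'}} = Sibl_k$ together with the observation that each $\Gamma_k^{N_{n.j'}}$ is either empty or the full $Sibl_k$, a counting argument on the $d^n$ leaves forces $|\bigcup_k \Gamma_k^{E_{n-1.j}}| = 1$. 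For~(ii), the analogous disjointness follows from point~5 (which restricts $\Gamma_{(k+id) \bmod d^2}^{N_{n-2.j}}$ by $r \bmod d$), and combining with (i) run backwards through the expansion rule~4 gives $d$ RMTs per label at level $n-2$. For~(iii), I argue top-down: the root trivially has all $d^3$ RMTs in $\bigcup_k \Gamma_k^{N_{0.0}}$, and the expansion rule propagates this fullness to each descendant at an intermediate level; the $d$ outgoing edges of such a node then partition $d^3$ RMTs by the $f$-value, and completeness combined with the already-established cardinalities at the bottom forces this partition to be balanced, yielding $d^2$ per edge.

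The delicate step I anticipate is condition~(iii): showing that mere completeness, i.e.\ non-emptiness of every edge, actually pins the label cardinality down to exactly $d^2$ rather than merely bounding it from below. Non-emptiness alone would permit very uneven splits; what supplies the balance is the propagation of the tight counts $1$ and $d$ at the two bottom levels back up through the tree via the expansion map $r \mapsto \{dr+t \bmod d^3 : 0 \leq t \leq d-1\}$, an action on sibling classes. Carrying out this bookkeeping carefully --- and in particular showing that the rule $f$ must itself be balanced (in the sense of Definition~\ref{balancedrule}) whenever the tree is complete --- will be the technical heart of the argument.
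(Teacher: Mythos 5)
Your handling of the easy direction is fine (the three conditions make every label non-empty, hence every possible edge exists and the $d$-ary tree with $n+1$ levels is complete), and you correctly locate all the content in the converse. The gap is in the very first step of your converse, on which the rest of the plan leans. The identity $\bigcup_{j}\Gamma_k^{N_{n.j}} = Sibl_k$ is a statement about a \emph{union} of sets: combined with the fact that each leaf set is either empty or all of $Sibl_k$, it only says that each $Sibl_k$ shows up at some leaf; it carries no multiplicity information and therefore cannot bound $\mid\bigcup_k \Gamma_k^{E_{n-1.j}}\mid$ from above. Completeness, used purely locally at the bottom, gives exactly the lower bound $\geq 1$, and nothing in the structure of levels $n-1$ and $n$ rules out a level-$(n-1)$ node carrying, say, $d+1$ RMTs with all $d$ outgoing edges non-empty and one label of size two: the size of such a node is limited only by the size of the incoming level-$(n-2)$ label, i.e., by information coming from \emph{above}. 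So condition (i) cannot be established first; and since your (ii) is ``(i) run backwards'' (which by itself only yields $\mid l_{n-2.j}\mid \geq d$, the matching upper bound needing the $d^2$ count of (iii) together with the level-$(n-2)$ restriction), the bottom-up ordering collapses.

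The paper pins the counts in the opposite direction. The root carries exactly $d^3$ RMTs, each RMT on a label contributes at most $d$ siblings to the child node, and at levels $n-2$ and $n-1$ points~\ref{rtd5} and~\ref{rtd6} of Definition~\ref{tree} let at most one sibling out of $d$ survive. Hence an intermediate label with fewer than $d^2$ RMTs propagates a deficit downward which the two $\frac{1}{d}$ cuts convert into an empty edge at level $n-1$, contradicting completeness; thus every intermediate label needs at least $d^2$ RMTs, and pigeonhole at the root ($d^3$ RMTs over $d$ edges) forces exactly $d^2$, after which exactly $d$ at level $n-2$ and exactly one at level $n-1$ follow — that is, (iii), then (ii), then (i). Note also that your step ``the expansion rule propagates the fullness of the root to each intermediate node'' is circular: a full node sends $\mid f^{-1}(m)\mid$ RMTs along its $m$-edge, and the child is full again only if the residues $r \bmod d^2$ with $f[r]=m$ exhaust $0,\dots,d^2-1$, which is essentially the balancedness you set out to prove. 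Your outline could be salvaged by a genuinely global double counting (each RMT surviving in $l_{n-1.j}$ corresponds to a preimage of the leaf configuration, and the preimage pairs total $d^n$), but that requires proving the RMT--preimage correspondence, a lemma that neither your sketch nor the union identity you invoke supplies.
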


\begin{proof}

\noindent\underline{\textit{For $``$if '' Part :}}

Let us consider, the number of RMTs in the label of an edge is less than that is
mentioned in (\ref{c1}) to (\ref{c3}). That means,

(i)\label{i1} There is no RMT in the label $l_{n-1.j}$, for some $j$. That is,\\ $ \bigcup_{0 \leq k \leq d^2 -1} {\Gamma_{k}}^{E_{n-1.j}}= \emptyset $. It implies, the tree has a non-reachable edge and so, it is incomplete.

(ii) \label{i2} The label $l_{n-2.j}$ contains less than $d$ RMTs, for some $j$. That is, \\$ \mid \bigcup_{0 \leq k \leq d^2 -1} {\Gamma_{k}}^{E_{n-2.j}}\mid \leq d-1$. Then, the number of RMTs in the node $N_{n-1.j} \leq d(d-1)$.
Since the node is at level $(n - 1)$, only $\frac{1}{d}$ of the RMTs are valid according to the Definition~\ref{tree}. So, the number of valid RMTs is $ \leq \frac{d(d-1)}{d} = (d-1)$, which implies that the maximum number of possible edges from the node is $d-1$. Hence, there is at least one (non-reachable) edge ${E_{n-1.b}}$ for which $\bigcup_{0 \leq k \leq d^2 -1} {\Gamma_{k}}^{E_{n-1.b}}= \emptyset $.

(iii) \label{i3} Say, each other label $l_{i.j}$ contains less than $d^2$ RMTs, that is, \\$ \mid \bigcup_{0 \leq k \leq d^2 -1} {\Gamma_{k}}^{E_{i.j}}\mid < d^2$, $(0 \leq i \leq n-3)$. Then, $ \mid \bigcup_{0 \leq k \leq d^2 -1} {\Gamma_{k}}^{N_{i+1.j}}\mid < d^3$. Here, the node $N_{i+1.j}$ may have $d$ number of edges. In best case, the tree may not have any non-reachable edge up to level $(n - 2)$. Then there exists at least one edge $E_{n-3.p}$, for which $ \mid \bigcup_{0 \leq k \leq d^2 -1} {\Gamma_{k}}^{E_{n-3.p}}\mid < d^2$, which makes a node $N_{n-2.p}$ where $ \mid \bigcup_{0 \leq k \leq d^2 -1} {\Gamma_{k}}^{N_{n-2.p}}\mid < d^3$. Since the node is at level $(n - 2)$, it has maximum $\frac{d(d^2-1)}{d} = (d^2-1)$ valid RMTs. This implies, there exists at least one edge, incident to $N_{n-2.p}$, where $ \mid \bigcup_{0 \leq k \leq d^2 -1} {\Gamma_{k}}^{E_{n-2.q}}\mid < d$, which makes the tree an incomplete one by (ii).

On the other hand, if for any intermediate edge ${E_{i.j_1}}$, $ \mid \bigcup_{0 \leq k \leq d^2 -1} {\Gamma_{k}}^{E_{i.j_1}}\mid$ $\geq d^2$, then an edge $ E_{i.j_2}$ can be found at the same label $i$ for which \\$ \mid \bigcup_{0 \leq k \leq d^2 -1} {\Gamma_{k}}^{E_{i.j_2}}\mid < d^2$, where $ 0 \leq i \leq n-3$, and $j_1 \neq j_2$. Then, by (iii), the tree is incomplete. Now, if for any $p$, label $l_{n-2.p}$ contains more than $d$ RMTs, then also there exists an edge $E_{n-2.q}$ for which $ \mid \bigcup_{0 \leq k \leq d^2 -1} {\Gamma_{k}}^{E_{n-2.q}}\mid$ $< d$. Hence, the tree is incomplete (by (ii)). Similarly, if for an edge $E_{n-1.m}$, $ \mid \bigcup_{0 \leq k \leq d^2 -1} {\Gamma_{k}}^{E_{n-1.m}} \mid > 1 $, then also the tree is incomplete. Therefore, if the number of RMTs for any label is not same as mentioned in (i) to (iii), the reachability tree is incomplete.

\noindent\underline{\textit{For $``$ Only if '' Part:}}

Now, let us consider that, the reachability tree is complete. The root $N_{0.0}$ has $d^3$ number of RMTs. Now, these RMTs have to be distributed so that the tree remains complete. Let us take that, any edge $E_{0.j_1}$ has less than $d^2$ RMTs, another edge $E_{0.j_2}$ has greater than $d^2$ RMTs and other edges $E_{0.j} (0 \leq j,j_1,j_2 \leq d-1$ and $j_1 \neq j_2 \neq j)$ has $d^2$ RMTs. Then node $N_{1.j_1}$ has less than $d^3$ RMTs, $N_{1.j_2}$ has greater than $d^3$ RMTs and other edges $N_{1.j}$ has $d^3$ RMTs. The tree has no non-reachable edge at level $1$. Now, this situation may continue upto level $n-2$. At level $(n-2)$, only $\frac{1}{d}$ of the RMTs are valid (see Definition~\ref{tree}). So, the nodes with less than $d^3$ RMTs has at maximum $d^2-1$ valid RMTs and so on. For such nodes at level $n-2$, there exists at least one edge $E_{n-2.p}$, such that $ \mid \bigcup_{0 \leq k \leq d^2 -1} {\Gamma_{k}}^{E_{n-2.p}}\mid < d$ for which the tree will have non-reachable edge (by (ii)). The situation will be similar if any number of intermediate edges have less than $d^2$ RMTs implying some other edges at the same level have more than $d^2$ RMTs. Hence the tree will be incomplete which contradicts our initial assumption. So, for all intermediate edges $E_{i.j}$, $ \mid \bigcup_{0 \leq k \leq d^2 -1} {\Gamma_{k}}^{E_{i.j}}\mid = d^2$, where $ 0 \leq i \leq n-3$.

Now, if this is true, then at level $n-2$, the nodes have $d^3$ RMTs out of which $d^2$ are valid. If an edge $E_{n-2.p}$ has less than $d$ RMTs, then the node $N_{n-1.p}$ has at maximum $d(d-1)$ RMTs out of which only $d-1$ are valid. Hence, at least one edge, incident to $N_{n-1.p}$, is non-reachable making the tree incomplete. Similar thing happens if there exist more edges like $E_{n-2.p}$. So, each edge label $l_{n-2.j}$ must have $d$ RMTs. In the same way, each of the edge labels $l_{n-1.j}$, for any $j$, is to be made with a single RMT to make the tree complete. Hence the proof.
\end{proof}

\begin{corollary}
\label{revcor1}
The nodes of a reachability tree of a reversible CA of length $n$ $(n \geq 3)$ contains

\begin{enumerate}

\item $d$ RMTs, if the node is in level $n$ or $n-1$, i.e. $ \mid \bigcup_{0 \leq k \leq d^2 -1}{\Gamma_{k}}^{N_{i.j}} \mid = d$ for any $j$, when $i = n$ or $n-1$.

\item $d^2 $ RMTs, if the node is at level $n-2$ i.e, $ \mid \bigcup_{0 \leq k \leq d^2 -1}{\Gamma_{k}}^{N_{n-2.j}} \mid = d^2$ for any $j$.

\item $d^3$ RMTs for all other nodes $N_{i.j}$, $ \mid \bigcup_{0 \leq k \leq d^2 -1}{\Gamma_{k}}^{N_{i.j}} \mid = d^3$ where ${ 0 \leq i \leq n-2}$.
\end{enumerate}
\end{corollary}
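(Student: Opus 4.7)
The plan is to derive Corollary~\ref{revcor1} as a direct counting consequence of Theorem~\ref{revth2}, using the structural rules of Definition~\ref{tree}. The preparatory observation I would spell out first is that, at every node $N_{i.j}$, the CA rule routes each RMT $r$ to a single outgoing edge, namely the $m$-edge with $m = f[r]$ (Point~2 of Definition~\ref{tree}). Consequently the union in Point~3 of Definition~\ref{tree} is actually a disjoint union, and the total number of RMTs sitting at a node is simply the sum of the numbers of RMTs on its $d$ outgoing edges.

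With that disjointness in hand I would perform a straightforward case split on the level. For any node $N_{i.j}$ with $0 \le i \le n-3$ (including the root, which already by Point~\ref{rtd1} carries $d^2$ sibling sets of $d$ RMTs each, for a total of $d^3$), Theorem~\ref{revth2}(\ref{c3}) assigns $d^2$ RMTs to each outgoing edge, so the node carries $d \cdot d^2 = d^3$ RMTs. For a node at level $n-2$, Theorem~\ref{revth2}(\ref{c2}) assigns $d$ RMTs per outgoing edge, so the node contains $d \cdot d = d^2$ RMTs. For a node at level $n-1$, Theorem~\ref{revth2}(\ref{c1}) assigns exactly one RMT to each outgoing edge, so the node contains $d$ RMTs.

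The leaves at level $n$ are not captured by this sum-over-children argument, so I would handle them separately via Point~\ref{rtd4} of Definition~\ref{tree}. A leaf $N_{n.j}$ is generated solely from its incoming edge $E_{n-1.j}$, and by Theorem~\ref{revth2}(\ref{c1}) this edge label consists of a single RMT $r$; Point~\ref{rtd4} then places the $d$ residues $dr,\, dr+1,\, \ldots,\, dr+(d-1) \pmod{d^3}$ into the leaf, and since these $d$ consecutive residues are pairwise distinct modulo $d^3$ the leaf holds exactly $d$ RMTs. The only genuinely non-mechanical step in the whole argument is the disjointness justification that drives the sum formula; once that is secured via the functionality of $f$, everything else is arithmetic riding on Theorem~\ref{revth2}, so I do not anticipate any substantive obstacle.
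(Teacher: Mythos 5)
Your proposal is correct, but it counts in the opposite direction from the paper. The paper's own proof is a one-liner based on the \emph{incoming} edge: by Point~\ref{rtd4} of Definition~\ref{tree}, each RMT on the edge $E_{i.j}$ contributes its $d$ sibling RMTs to the child $N_{i+1.j}$, so the node counts follow from the edge counts of Theorem~\ref{revth2} (with the level-$(n-2)$ and level-$(n-1)$ restrictions of Points~\ref{rtd5}--\ref{rtd6} implicitly trimming the totals to $d^2$ and $d$). You instead partition each non-leaf node's RMTs among its $d$ \emph{outgoing} edges, observing that Point~2 routes every RMT $r$ to exactly the $f[r]$-edge, so the union in Point~3 is disjoint and the node count is the sum of the $d$ edge counts given by Theorem~\ref{revth2}; only the leaves are handled via the incoming edge, where the single-RMT label makes the sibling expansion unambiguous. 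This buys you something real: the paper's incoming-edge argument tacitly assumes that distinct RMTs on an edge generate distinct sibling sets (which fails when an edge carries equivalent RMTs in the same set) and must invoke the ``only $\frac{1}{d}$ valid'' rule at levels $n-2$ and $n-1$, whereas your partition argument needs neither, since the relevant edge cardinalities are already what Theorem~\ref{revth2} supplies at each level. The one hypothesis you should state explicitly is completeness of the tree (Theorem~\ref{revth1}), which guarantees that every non-leaf node really has all $d$ outgoing edges over which you sum; with that said, your argument covers every level $0$ through $n$ and is, if anything, a slightly tighter justification than the paper's.
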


\begin{proof}This is directly followed from Theorem~\ref{revth2}, because for each RMT on an edge $E_{i.j}$, $d$ number of sibling RMTs are contributed to $N_{i+1.j}$.
\end{proof}


Like CA rules, we classify the nodes of a reachability tree as \emph{balanced} and \emph{unbalanced}.

\begin{definition}
\label{balancednode}
A node is called \textbf{balanced} if it has equal number of RMTs corresponding to each of the $d$-states possible; otherwise it is \textbf{unbalanced}. So, for a balanced node, number of RMTs with next state value $0$s = number of RMTs with next state value $1$s = $\cdots$ =  number of RMTs with next state value $(d-1)$s.
\end{definition}

Therefore, the root of the reachability tree of a balanced rule is balanced, because number of RMTs associated with each of the $d$ states is $d^2$.

\begin{lemma} 
\label{revcor2}
The nodes of the reachability tree of an $n$-cell $(n \geq 3)$ reversible CA are balanced.

\end{lemma}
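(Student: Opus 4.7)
The plan is to deduce balanced-ness from the uniform edge-cardinality statement of Theorem~\ref{revth2} by counting RMTs through the outgoing edges of each node.

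First, I would fix any non-leaf node $N_{i.j}$ with $0 \leq i \leq n-1$. Point~2 of Definition~\ref{tree} assigns each RMT $r \in N_{i.j}$ to exactly one of the $d$ outgoing edges, namely to $E_{i.dj+f[r]}$, and to no other. Consequently, the number of RMTs of $N_{i.j}$ whose next-state value equals $m$ is exactly $\mid \bigcup_{k} {\Gamma_{k}}^{E_{i.dj+m}} \mid$, i.e.\ the total RMT count on its $m$-edge.

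Since the CA is reversible, Theorem~\ref{revth1} gives completeness of the tree, and Theorem~\ref{revth2} then forces these outgoing-edge cardinalities to be uniform across $m$: $d^2$ if $0 \leq i \leq n-3$, $d$ if $i = n-2$, and $1$ if $i = n-1$. In every case the value depends only on the level $i$ and not on which of the $d$ outgoing branches is chosen, so the number of RMTs in $N_{i.j}$ whose image under $f$ is $m$ is the same constant for every $m \in \{0,1,\dots,d-1\}$. By Definition~\ref{balancednode}, this is precisely the balanced-node property.

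The step I expect to be the main obstacle is the level-$n$ leaves, which have no outgoing edges and so escape the counting-via-children scheme above. For those, I would instead invoke Definition~\ref{tree} point~\ref{rtd4}: a leaf's content is generated by the single RMT on its incoming level-$(n-1)$ edge via the sibling map $r \mapsto \{d\cdot r, d\cdot r+1, \ldots, d\cdot r + d - 1\} \pmod{d^3}$. Combining this with the root identity ${\Gamma_{k}}^{N_{0.0}} = \bigcup_{j} {\Gamma_{k}}^{N_{n.j}}$ and the uniqueness-of-preimages property inherent in a reversible CA, I would argue that across the $d^n$ leaves the sibling blocks must distribute so as to respect the balance already established one level above, which is where the delicate work of the full proof lies.
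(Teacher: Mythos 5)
Your main argument is exactly the paper's proof: completeness (Theorem~\ref{revth1}) forces every non-leaf node to have all $d$ outgoing edges, Point~2 of Definition~\ref{tree} routes each RMT of a node to the unique edge indexed by its next-state value, and Theorem~\ref{revth2} (equivalently Corollary~\ref{revcor1}) makes all $d$ outgoing edge labels of a node have the same cardinality ($d^2$, $d$, or $1$ depending on the level), so every node at levels $0$ through $n-1$ is balanced. That part is complete and correct, and it is all the paper itself proves.

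The ``main obstacle'' you flag --- the level-$n$ leaves --- is not an obstacle to be overcome but a case you should drop: the claim is false there in general, and the paper's proof silently excludes it. A leaf of a complete tree consists of a single sibling set $Sibl_k$ (Corollary~\ref{revcor1} and the remark after Definition~\ref{tree}), and nothing forces the $d$ RMTs of a sibling set to take distinct next-state values in a reversible CA. For instance, the rule $222111000222111000222111000$ is reversible for every $n$ (a STRATEGY~III rule), yet $f[0]=f[1]=f[2]=0$, so the leaf whose content is $Sibl_0=\{0,1,2\}$ carries three RMTs with next-state value $0$ and none with values $1$ or $2$ --- unbalanced. So the sibling-block/preimage-uniqueness argument you sketch for the leaves cannot succeed; the lemma, as it is proved and later used in the paper (balance of nodes up to levels $n-2$ and $n-1$ in the greedy strategies), should be read as a statement about the non-leaf nodes, which your counting argument already settles.
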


\begin{proof}Since the reachability tree of a reversible $d$-state CA is complete, each node has $d$ number of edges i.e $d$ number of children. Since a node $N_{i.j}$ contains $d^3$ RMTs when $i < n-2$ (Corollary ~\ref{revcor1}) and an edge $E_{i+1.k}$, for any $k$, contains $d^2$ RMTs (Theorem~\ref{revth2}), the node $N_{i.j}$ is balanced. Here, number of RMTs, associated with same next state values, is $d^2$. Similarly, the nodes of level $n-2$ and $n-1$ are balanced. 
\end{proof}

\begin{theorem}
\label{revth4}
A finite CA of length $n (n \geq 3)$ with unbalanced rule is irreversible.
\end{theorem}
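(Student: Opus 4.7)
The plan is to argue by contrapositive using Lemma~\ref{revcor2}, which already tells us that every node in the reachability tree of a reversible CA is balanced. So it suffices to exhibit at least one unbalanced node in the reachability tree of an unbalanced CA, and the natural candidate is the root $N_{0.0}$ itself.

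First I would observe that, by Point~\ref{rtd1} of Definition~\ref{tree}, the root $N_{0.0}$ is the disjoint union $\bigcup_{k=0}^{d^2-1} Sibl_k$, which is exactly the set of all $d^3$ RMTs of the rule. Hence, for each state $m \in \{0,1,\dots,d-1\}$, the number of RMTs in $N_{0.0}$ whose next-state value is $m$ equals $|\{r : f[r] = m\}|$, i.e.\ the total multiplicity of the digit $m$ in the rule table. By Definition~\ref{balancedrule}, the rule is balanced precisely when these counts coincide (and equal $d^2$) for all $m$; equivalently, $N_{0.0}$ is a balanced node in the sense of Definition~\ref{balancednode} iff the rule is balanced.

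Now suppose the CA has an unbalanced rule. Then by the observation above there exist states $m_1 \neq m_2$ for which the number of RMTs in $N_{0.0}$ mapping to $m_1$ differs from the number mapping to $m_2$, so $N_{0.0}$ is an unbalanced node. By Lemma~\ref{revcor2}, if the CA were reversible then every node of its reachability tree — in particular the root — would be balanced, a contradiction. Therefore the CA is irreversible.

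The proof is essentially immediate once the identification ``root $=$ all RMTs of $f$'' is made explicit; there is no real obstacle, only the need to carefully match the combinatorial definition of a balanced rule with the structural definition of a balanced node, and to invoke Lemma~\ref{revcor2} in its contrapositive form.
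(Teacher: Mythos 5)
Your proof is correct and takes essentially the same route as the paper: both argue in contrapositive fashion that an unbalanced rule makes the root $N_{0.0}$ (which consists of all $d^3$ RMTs) an unbalanced node, and then conclude irreversibility. The only cosmetic difference is the finishing citation — the paper notes that an unbalanced root forces some edge $E_{0.j}$ to carry fewer than $d^2$ RMTs and invokes Theorem~\ref{revth2}, while you invoke Lemma~\ref{revcor2} directly; since that lemma is itself derived from Theorem~\ref{revth2}, the two arguments are equivalent, and yours is, if anything, slightly more compact.
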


\begin{proof}If the rule is unbalanced, then it has unequal number of RMTs corresponding to each state. That means, the root node $N_{0.0}$ is unbalanced. Therefore, there exists an edge $E_{0.j}$ where $\mid\bigcup_{0\leq k \leq d^2-1} \Gamma_k^{E_{0.j}}\mid < d^2$ $(0 \leq j \leq d^2-1)$. Hence the CA is irreversible by Theorem~\ref{revth2}.
\end{proof}

However, the CAs with balanced rules can not always be reversible. Following example illustrates this.

\begin{example}

Consider the CA $201012210201012210201012210$ of Figure~\ref{fig:rt2}. The rule has $9$ RMTs for each of the states $0, 1,$ and $2$, so it is balanced. Each node $N_{i.j}$, when $i \leq n-3 = 1$ contains $27$ RMTs, each node at level $n-2$ i.e $N_{2.j}$ contains $9$ RMTs and each node of level $n-1$ i.e. $N_{3.j}$ contains $3$ RMTs. However, all the nodes of level $n$, i.e. $N_{4.j}$, $0\leq j \leq 3^4-1$ do not contain $3$ RMTs; for example, the nodes $N_{4.5}, N_{4.7}, N_{4.20}$ etc. consist of $6$ RMTs, but the nodes $N_{4.1}, N_{4.6}, N_{4.9}, N_{4.16}$ etc. are empty. 
So, the CA does not satisfy Corollary~\ref{revcor1} and it is irreversible.

\end{example}

Depending on the theoretical background developed in this section, we now test reversible $d$-state CAs in the next section.

 \section{Decision Algorithm for Testing Reversibility}
\label{bij}

The simplest approach of testing reversibility of an $n$-cell CA is, develop the reachability tree of the CA starting from root, and observe whether the reversibility conditions given by the theorems \ref{revth1} and \ref{revth2} are satisfied for the given rule or not. If there is any such node / edge that does not satisfy any of these conditions, then the CA is \textit{irreversible}, otherwise it is a \textit{reversible} CA. The problem of this approach is that if the CA is reversible then the tree grows exponentially, so when $n$ is not very small, it is difficult to handle the CA with this approach. However, we have following two observations - 

\begin{enumerate}

\item \label{pt1} If $N_{i.j} = N_{i.k}$ when $j \neq k$ for any $i$, then both the nodes are roots of two similar sub-trees. So, we can proceed with only one node. Similarly, if $l_{i.j} = l_{i.k} ~(j \neq k)$, then also we can proceed with only one edge.
\item \label{pt2} If $N_{i.j} = N_{i'.k}$ when $i>i' (0\leq i,i' \leq n-3)$, then the nodes that follow $N_{i'.k}$ are similar with the nodes followed by $N_{i.j}$. Therefore, we need not to explicitly develop the sub-tree of $N_{i.j}$. It is observed that after few levels, no unique node is generated. So, for arbitrary large $n$, we need not to develop the whole tree.
\end{enumerate}

Following above two observations, we can develop \emph{minimized} reachability tree which does not grow exponentially. In fact, very few nodes are generated in such minimized reachability tree. To develop minimized reachability tree with only unique nodes, we need to put some extra links. For observation~\ref{pt1}, we exclude $N_{i.k}$ and add a link from the parent of $N_{i.k}$ to $N_{i.j}$. For observation~\ref{pt2}, we exclude $N_{i.j}$ and then form a link from the parent of $N_{i.j}$ to $N_{i'.k}$. In this case, a \emph{loop} is formed between levels $i$ and $i'$. This loop implies that the node reappears at levels $i + (i - i')$, $i + 2(i-i')$, $i + 3(i-i')$, etc. (Strictly speaking, the minimized reachability tree is not a tree. In our further discussion, however, we call it as tree.) Note that, the minimized reachability tree is a directed graph, and the directions are necessary to reconstruct the original tree.

In a minimized reachability tree, a node, say $\mathcal{N}$ can be part of more than one loop, which implies that, $\mathcal{N}$ can appear at levels implied by each of the loops. However, if we observe in more detail, we can find that, although every loop confirms presence of $\mathcal{N}$, but all loops are not significant in the tree. For example, if $\mathcal{N}$ is part of a loop of length $2$ as well as a loop of length $4$, then for the loop of length $4$, the node will not appear in any extra levels than the loop of length $2$; that is, the loop of length $2$ is sufficient for affirming the levels in which $\mathcal{N}$ will appear. Similarly, if $\mathcal{N}$ appears in a loop of length $1$ (self-loop), then it will appear in every successive levels; that means, all other loops for this node will be irrelevant. In the same way, if a node has one loop of length $2$, and another loop whose length is an odd number, then from the last level of the second loop onwards, the node will be present in every level, that is, will behave as having a self-loop. Nonetheless, if we get two loops of length $l_1, l_2$ for a node with lengths of the loops $> 2$ and the lengths are mutually prime (that is, \emph{GCD($l_1,l_2$)} $= 1$), then both these loops are important for the presence of the node at certain levels; but if \emph{GCD($l_1,l_2$)} $> 1$, then none of the lengths will remain relevant and new loop length will be the \emph{GCD} value. In this way, we can find some loops which are important for a node and some loops which are not; the loops that are not important for a node can be discarded. We can also observe that, if $\mathcal{N}$ is in a loop and present in level $i$, then the children of $\mathcal{N}$ are also involved in the loop and always present in level $i+1$. This implies, whenever $\mathcal{N}$ appears in more than one level, then all the nodes of the sub-tree rooted at $\mathcal{N}$ also appear in levels updated according to levels of $\mathcal{N}$. Note that, if a node is in self-loop, then the whole sub-tree with the node as root will also have self-loop, that is, will be generated in every level.

\begin{example}	
	The minimized reachability tree of $2$-State CA with rule $01001011$ is shown in Figure~\ref{fig:rt3}. In this figure, the tree has $21$ nodes and last unique node is at level $5$. Every node has $2$ edges, labeled by $0$ and $1$ respectively and a set of levels from which the node was referred. For example, $\{1,3\}$ associated to $N_{1.0}$ implies that, this node has been referred in levels $1$ and $3$ respectively and is part of a loop of length $2$. Directed line (link) from one node to another implies, child of the first node is a duplicate node equivalent to the second node.

	\begin{figure}[hbtp]

		 \centering
			\includegraphics[width= 4.5in, height = 3.7in]{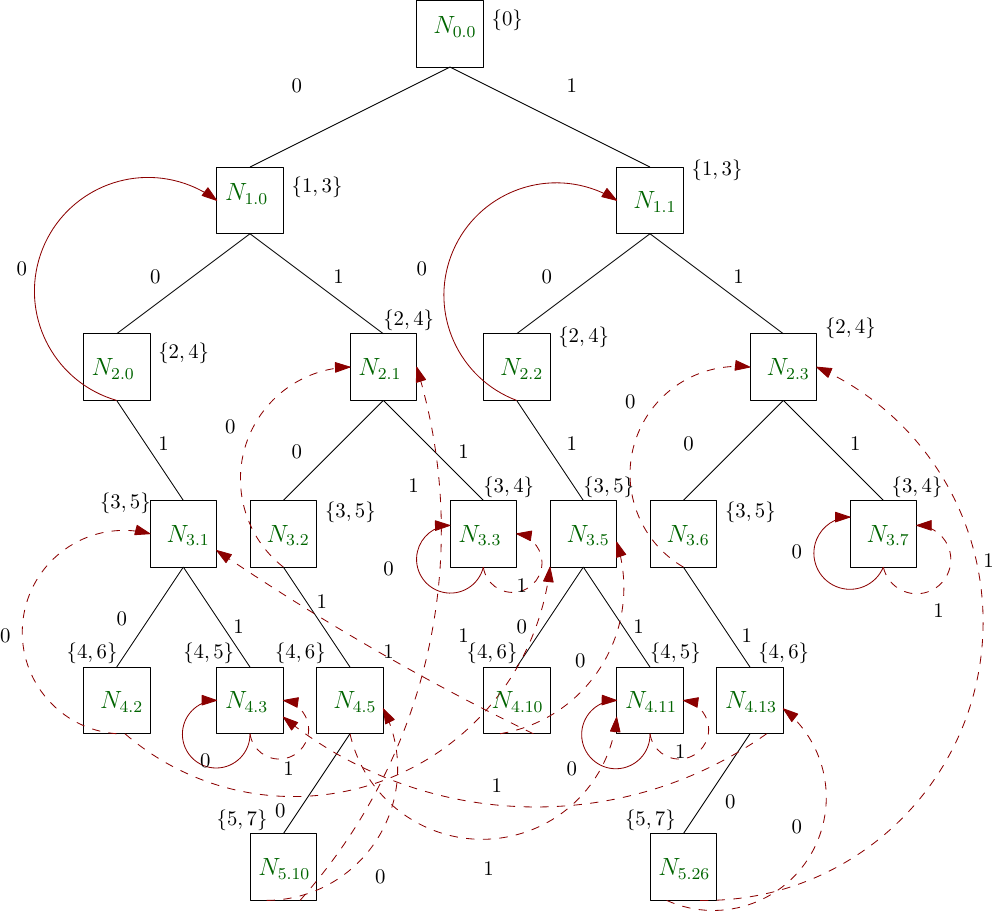}
			\vspace{-1.0em}
			\caption{Minimized Reachability tree of $2$-State CA with rule $01001011$}
			\vspace{-1.0em}
			\label{fig:rt3}

	\end{figure}

	It can be noticed that, although a node is connected with several loops, many of them are not important. For example, the node $N_{2.1}$ is child of $N_{1.0}$, whose set of levels is updated as $\{1,3\}$ by the link from node $N_{2.0}$. So, as a child, set of levels of $N_{2.1}$ is also updated as $\{2,4\}$. But, this node is also part of two other loops, one from node $N_{3.2}$ and another from node $N_{5.10}$, which want to update its level by $4$ and $6$ respectively. As, $4$ is already present in its set of levels, so the loop from node $N_{3.2}$ is not relevant. Similarly, as the length of the loop from node $N_{5.10}$ is $4$ and length of previous loop is $2$, so, the new loop becomes insignificant and level $6$ is also not added in set of levels of $N_{2.1}$. Note that, the set of levels for a node is updated only for the relevant loops in the tree. The loops which are not important, are shown in dashed line in Figure~\ref{fig:rt3}. 
	It can also be observed that, self loops always get priority over other loops for a node. For example, set of levels of the node $N_{3.3}$ was $\{3,5\}$ as a child of node $N_{2.1}$. But, when this node gets its self-loop, the set of levels is updated as $\{3,4\}$, that is, previous loop of length $2$ is dominated by the self-loop of length $1$. It can also be noticed that, for many of the nodes, first loop is prevailed and other loops become insignificant.

\end{example}

However, we can find the possible nodes of an arbitrary level, $p$ from the minimized reachability tree. If a node appears only in level $i$, then the node can not appear in level $p$ $(p > i)$. On the other hand, if a node of the minimized reachability tree appears in level $i$, as well as in level $i'$ (that is, length of the loop is $i-i'$), and if $p - i' \equiv 0 \pmod{(i-i')}$ $(p > i > i')$, then the node is present at level $p$. Since the nodes of level $n-2$ (also of level $n-1$) are special in the reachability tree, we can find the possible nodes of level $n-3$ using this technique, and can then get the nodes of level $n-2$.

In fact, we can verify whether a node belongs to level $n-2$ or level $n-1$ directly in advance - whenever the set of levels of a node, say $\mathcal{N}$ is updated and has multiple elements, using the above technique check whether the node is part of level $n-2$ and / or level $n-1$. If $\mathcal{N}$ is part of level $n-2$, then use the following set operation:
$\Gamma_{k}^{\mathcal{N'}} \leftarrow \Gamma_{k }^{\mathcal{N}} \cap \lbrace i, i+d, i+2d, \cdots, i+(d^2-1)d \rbrace$, where $ i = \floor{\frac{k}{d}}$ and $ 0 \leq k \leq d^2-1$ (see Point~\ref{rtd5} of Definition~\ref{tree}) and if $\mathcal{N}$ is part of level $n-1$, then use the following set operation: 
$\Gamma_{k}^{\mathcal{N''}} \leftarrow \Gamma_{k }^{\mathcal{N}} \cap \lbrace  k + i \times d^2 ~|~ 0 \leq i \leq d-1 \rbrace, { 0 \leq k \leq d^2-1}$ (see Point~\ref{rtd6} of Definition~\ref{tree}). Now, we can verify whether the nodes ${(\Gamma_{k}^{\mathcal{N'}})}_{0\leq k \leq d^2-1}$ and ${(\Gamma_{k}^{\mathcal{N''}})}_{0\leq k \leq d^2-1}$ obey the conditions of reversibility given by the Corollary~\ref{revcor1} and Lemma~\ref{revcor2}.
Advantage of this procedure is, we can detect many balanced rules, which violate reversibility property for nodes of level $n-2$ or level $n-1$, at the first occurrence of such node.
However, if $n$ is too small, then, we need to have the remaining nodes of level $n-2$ from the unique nodes generated from level $n-3$.

The proposed algorithm (\emph{CheckReversible}) develops the minimized reachability tree and stores the unique nodes of the tree. If any of the nodes is unbalanced (Lemma~\ref{revcor2}) or does not follow the conditions of Corollary~\ref{revcor1}, the CA is reported as irreversible. The algorithm uses two data-structures - \emph{NodeList} to store the unique nodes and \emph{NodeLevel} to store the level number(s) of the nodes. Each of the nodes of \emph{NodeList} is also associated with a flag - \emph{selfLoop}, which is set when the node has self loop. The algorithm also uses some variables, like $uId$ as index of \emph{NodeList}, $i$ as the current level of the tree and $p$ as the parent node. As input, it (Algorithm~\ref{rev_algo}) takes a $d$-state CA rule and  $n \geq 3$ as the number of cells and outputs $``$Irreversible'' if the $n$-cell CA is irreversible, and $``$Reversible'' otherwise. The algorithm uses following two procedures. The steps of these procedures are not shown in detail in this presentation - 
\begin{description}
	\item[	\emph{verifyLastLevels()}] As argument it accepts a node, and then checks whether it can exist at level $n-2$ or $n-1$. If yes, based on the above mentioned logic the procedure decides whether this presence can make the CA irreversible.
	
	\item[	\emph{updateSubTree()}] This procedure updates a sub-tree when a loop is formed. The update includes the modification of \emph{NodeLevel} of each node of the sub-tree based on the logic presented above. During the update of \emph{NodeLevel} of each node, the procedure \emph{verifyLastLevels()} is called to see if the node can be present at level $n-2$ or $n-1$. As argument, the procedure takes the $uId$ of the node which is the root of the sub-tree.

\end{description}

In the beginning, the algorithm checks whether the input CA is balanced. If not, it decides the CA as irreversible (\ref{st1} of Algorithm~\ref{rev_algo}). Otherwise, the root of the reachability tree is formed, and we set \emph{NodeList[$0$]} $\gets$ root, \emph{NodeLevel[$0$]} $ \gets \lbrace 0 \rbrace$ (\ref{st2}).  Then we find the nodes of the next level. If the nodes are unique, they are added to \emph{NodeList}. Otherwise, \emph{NodeLevel} of each node in the sub-tree rooted at the matched node of \emph{NodeList} is updated (\ref{st3}).

In \ref{st3}, the main step in this algorithm, first $d$ nodes of level $1$ are formed. If any of the nodes is similar to the root, the node is dropped and it is checked whether the new loop is valid or not. As \emph{NodeLevel[$0$]} has no existing loop ($\mid NodeLevel[0]\mid = 1$), so, it is set to $\lbrace 0, 1 \rbrace$, and \emph{selfLoop} flag associated with \emph{NodeList[$0$]} is set to $true$. This means, \emph{NodeList[$0$]} appears in level $0$, and level $1$ as well. 
Now, the procedure \emph{updateSubtree()} is called with argument $0$.

Here, the existing sub-tree of \emph{NodeList[$0$]}, say, \emph{NodeList[$1$]} in this case, is updated according to levels of its parent \emph{NodeList[$0$]}. That means, \emph{NodeLevel[$1$]} is updated as  $\lbrace 1, 2 \rbrace$ and \emph{selfLoop} flag of \emph{NodeList[$1$]} is set to $true$. As, both \emph{NodeList[$0$]} and \emph{NodeList[$1$]} satisfy the conditions of Corollary~\ref{revcor1} and Lemma~\ref{revcor2} for levels $n-2$ and $n-1$, we proceed to find the nodes of next level. To get them, we use the unique nodes of the previous level (\ref{st3}).

Note that, at any point of time we get a duplicate node, it is first decided whether the new loop is a relevant one; if it is not relevant, no action is required, otherwise \emph{updateSubTree()} is called. It may be observed that, a new unique node can also be part of a loop, if its parent has loop(s). So, for each new unique node, its \emph{NodeLevel} is updated by its parent's \emph{NodeLevel} and if it has loop,  \emph{verifyLastLevels()} is called with the new $uId$ to ensure early detection of irreversibility. If no unique node is found to add in the \emph{NodeList} in a level, we conclude that the minimized reachability tree is formed (\ref{st4}). The number of unique nodes is stored in \emph{uId}. As, for the minimized reachability tree, reversibility conditions are already asserted, the CA is declared as $``$Reversible'' (\ref{st8}).

However, for small $n$, the tree may not be completely minimized in \ref{st4}, i.e unique nodes may be generated up to level $n-2$. So, to get the remaining nodes of level $n-2$, we first find the unique nodes $(\mathcal{N})$ of level $n-2$ from the minimized reachability tree, and then use the operation $\Gamma_{k}^{\mathcal{N'}} \gets \Gamma_{k }^{\mathcal{N}} \cap \lbrace i, i+d, i+2d, \cdots, i+(d^2-1)d \rbrace$ ($ i = \floor{\frac{k}{d}}$ and $ 0 \leq k \leq d^2-1$) to get the actual nodes for level $n-2$ (\ref{st6}). Finally, we find the nodes of level $n-1$ from these nodes (\ref{st7}).

\begin{Walgo}[hbtp]{1.75cm}

	\BlankLine
	\scriptsize
	\SetKw{Fn}{Procedure}
	\SetKwFunction{FindGCD}{findGCD}
	\SetKwFunction{verifyForLastLevels}{verifyLastLevels}
	\SetKwFunction{updateSubTree}{updateSubTree}
	\SetKwInOut{Input}{Input}
	\SetKwInOut{Output}{Output}

	\Input{A $d$-state CA rule, $n$ (Number of cells)}
	\Output{\textit{Reversible} or \textit{Irreversible}}
	
	\rule[4pt]{0.95\textwidth}{0.95pt}\\
		\hspace{0.04\textwidth} \nlset{Step 1} Check whether the CA rule is balanced or not \; \label{st1} 
		\hspace{0.04\textwidth}	\lIf{ CA is not balanced }{
			Report $``$\textit{Irreversible}'' and
			$return$ \;
		}
		
		\hspace{0.04\textwidth} \nlset{Step 2}	\label{st2} Form the root of the reachability tree \; 
		\hspace{0.04\textwidth}	$NodeList[0] \gets$ root, $NodeLevel[0] \gets \lbrace 0 \rbrace$ \;
		\hspace{0.04\textwidth}	Set $i \gets 1$, $uId \gets 0$, $s \gets 0$, $j \gets 0$,  $tuId \gets 0$ \;		
		\hspace{0.04\textwidth}	\nlset{Step 3}\label{st3} \For {$p = s$ to $j$}{ 
			Get the children of $NodeList[p]$ \;
			\For {each child $\mathcal{N}$ of $NodeList[p]$}{
				\If{$i \neq n-2$}{
					\If{ ($\mathcal{N}$ is not balanced) OR ($\mid \bigcup_{0\leq m \leq d^2-1} {\Gamma_m}^{\mathcal{N}}\mid \neq d^3 $)}{
						Report $``$\textit{Irreversible}'' and
						$return$ \;
					}
				}
			\eIf{ $\mathcal{N}$ matches with $NodeList[k]$}{
				\eIf{$\mid NodeLevel[k] \mid=1$ AND $i \notin NodeLevel[k]$ \tcp{node is referred for the first time}} 
				{
					Set	$NodeLevel[k] \leftarrow NodeLevel[k] \cup \lbrace i \rbrace$ \;
					\lIf{loop is self-loop}{
						Set	$NodeList[k].selfLoop \leftarrow true$ \;}
						\updateSubTree{$k$} ; \tcp{update sub-tree adding the new loop}
				}
				{
					Set $loopFlag \leftarrow false$; \tcp{checks whether old loop value remains important}
					\If{$NodeList[k].selfLoop = false$ AND $i \notin NodeLevel[k]$}{
						Set $newLoop \gets i-\min(NodeLevel[k])$ \;
						\eIf{$newLoop = 1$ \tcp{new self-loop detected}}{
							Set $NodeLevel[k] \leftarrow \lbrace (i-1), i \rbrace$,
							$loopFlag \leftarrow true$ \;	
							Set	$NodeList[k].selfLoop \leftarrow true$ \;
							\updateSubTree{$k$} ; \tcp{update sub-tree by the new loop}
						}{
						\ForEach{$ l \in NodeLevel[k]$}{
							Set $oldLoop \gets l-\min(NodeLevel[k])$ \;
							Set $gcd \leftarrow$ \textit{GCD}($oldLoop,newLoop$) \;
							\If{$gcd = oldLoop$  \tcp{that is, old loop value prevailed}}{
								Set $loopFlag \leftarrow true$ and $ break $ ; \tcp{new loop is not relevant} 				
							}
							\ElseIf{$oldLoop = 2$ AND $gcd = 1$ \tcp{that is, valid loop length $ = 1$}}{
								Set $NodeLevel[k] \leftarrow \lbrace (i-1), i \rbrace$,
								$loopFlag \leftarrow true$ \;	
								Set $NodeList[k].selfLoop \leftarrow true$ \;
								\updateSubTree{$k$} and $ break $ \;
							}
							\ElseIf{$gcd >1$ \tcp{that is, updated valid loop length $ = gcd$}}{
								Set	$NodeLevel[k] \leftarrow \lbrace (i-gcd), i \rbrace$,
								$loopFlag \leftarrow true$ \;	
								\updateSubTree{$k$} and $ break $ \;  
							}
						}
						\If{$loopFlag = false$ \tcp{new loop is relevant}}{
							Set	$NodeLevel[k] \leftarrow NodeLevel[k] \cup \lbrace i \rbrace$ \;
							\updateSubTree{$k$} ; \tcp{update sub-tree adding the new loop}
						}
					}
				}
			}
		}	
		{
			Set	$uId \leftarrow uId + 1$, 
			$NodeList[uId] \leftarrow \mathcal{N}$ ; \tcp{add the unique node in the $NodeList$}
			\ForEach{$l \in NodeLevel[p]$}{
				Set	$NodeLevel[uId] \leftarrow \lbrace l+1 \rbrace$ ; \tcp{update child's level by parent's level}
			}
			\lIf{$NodeList[p].selfLoop = true$}{
				Set $NodeList[uId].selfloop = true $ \;
			}
			\If{$\mid NodeLevel[uId] \mid > 1$ \tcp{that is, the newly added unique node has a loop}}{
				\verifyForLastLevels{$uId$} \;
			}
		   }   
		 }
		}	
		\caption{\emph{CheckReversible}}
	\label{rev_algo}
\end{Walgo}

\setcounter{algocf}{0}

\begin{Walgo}[hbtp]{1.75cm}
	\BlankLine
	\SetKw{Fn}{Procedure}
	\SetKwFunction{FindGCD}{findGCD}
	\SetKwFunction{verifyReversibilityOfLastLevels}{verifyLastLevels}
	\SetKwFunction{updateSubTree}{updateSubTree}
	\scriptsize
			\hspace{0.04\textwidth} \nlset{Step 4}	\label{st4} If $j = uId$, that is, no unique node is generated in \ref{st3}, go to \ref{st8} \; 
			\hspace{0.04\textwidth} \nlset{Step 5}	\label{st5} \If{ $i <n-2$}{ 
				$s \gets j + 1 $; $j \gets uId$;  $i \gets i+1$ \;
				go to \ref{st3} \;}

			\hspace{0.04\textwidth} \nlset{Step 6}	\label{st6} \For {$p = j+1$ to $uId$}	{
				$\mathcal{N'} \gets NodeList[p]$ \;
				$\Gamma_{m}^{\mathcal{N'}} \gets \Gamma_{m }^{\mathcal{N'}} \cap \lbrace i, i+d, i+2d, \cdots, i+(d^2-1)d \rbrace$, where $ i = \floor{\frac{m}{d}}$ and $ 0 \leq m \leq d^2-1$ \; 
				\If{($\mathcal{N'}$ is not balanced) OR ($\mid \bigcup_{0\leq m \leq d^2-1} {\Gamma_m}^{\mathcal{N'}}\mid \neq d^2 $)}{
					Report $``$\textit{Irreversible}'' and $return$ \;
				}
				$NodeListT[tuId] \gets \mathcal{N'}$ \;	  
				$tuId \gets tuId + 1$ \;
			}
			
			\hspace{0.04\textwidth}	\nlset{Step 7} \label{st7} Get the nodes of level $n-1$ (Point~\ref{rtd6} of Definition~\ref{tree}) \; 
			\For{ any node $\mathcal{N''}$ of level $n-1$}{ 
				\If{($\mathcal{N''}$ is not balanced) OR ($\mid \bigcup_{0\leq m \leq d^2-1} {\Gamma_m}^{\mathcal{N''}}\mid \neq d $)}{
					Report $``$\textit{Irreversible}'' and
					$return$ \;
				} 
			} 
			
			\hspace{0.04\textwidth}	\nlset{Step 8}  \label{st8} 			                 
			Report $``$\textit{Reversible}'' and
			$return$ \;
			\caption{\emph{CheckReversible} contd..}
			\label{rev_algo_1}
			\end{Walgo}
Following examples illustrate the execution of Algorithm~\ref{rev_algo}.
\begin{example}

Let us consider a $2$-state CA $01001011$ with $n = 1001$ as input. Note that the CA is balanced, so the root $N_{0.0}$ is added to \emph{NodeList} and $0$ is added to \emph{NodeLevel[$0$]}. Following our algorithm, we get $2$ nodes $N_{1.0}$ and $N_{1.1}$ at level $1$ (see Figure~\ref{fig:rt3}), where $\bigcup_{0 \leq k \leq 3} {\Gamma_k}^{N_{1.0}} = \{0-8\}$ ($ {\Gamma_1}^{N_{1.0}} = \{4,5\}, {\Gamma_2}^{N_{1.0}} = \{0-3\}, {\Gamma_3}^{N_{1.0}} = \{6,7\}$) and $\bigcup_{0 \leq k \leq 3} {\Gamma_k}^{N_{1.1}} = \{0-8\}$ ($ {\Gamma_0}^{N_{1.1}} = \{0-3\}, {\Gamma_1}^{N_{1.1}} = \{6,7\}, {\Gamma_3}^{N_{1.1}} = \{4,5\}$). For each node, the sets whose contents are not mentioned, are empty. These nodes are unique and added to \emph{NodeList} and level $1$ is added to the corresponding \emph{NodeLevel}. $uId$, that is, index of \emph{NodeList} is now increased to $2$. The execution of the algorithm for this CA is shown in Table~\ref{ruleEx1}. In this table, first column represents level $i$, second column the current $uId$, third column content of $NodeList[uId]$ and the fourth column represents the $NodeLevel[uId]$. Other three columns are related to the loop; if $NodeLevel[uId]$ is associated with a new loop, fifth column is set to \emph{yes} and the nodes which are affected by this loop are listed in the sixth column. However, for the nodes whose $NodeLevel$ gets a new loop for their parent node, the last column of Table~\ref{ruleEx1} represents the parent $uId$.

\begin{table}[hbtp]
\renewcommand{\arraystretch}{1.1}
\centering
\caption{Execution of Algorithm~\ref{rev_algo} for $2$-state CA $01001011$ with $n = 1001$}
\label{ruleEx1}
\resizebox{1.0\textwidth}{9.1cm}{
\vspace{-\topsep} 
\begin{tabular}{|c|c|c@{\hspace{1em}}|c@{\hspace{-0.1em}}|c@{\hspace{-0.1em}}|c@{\hspace{-0.1em}}|c@{\hspace{-0.1em}}|}
\hline 
$i$ & $uId$ & $NodeList[uId]$ &\begin{tabular}{c}$NodeLevel$\\ $[uId]$ \end{tabular} & \begin{tabular}{c}Loop\\ Updated?\end{tabular} &  \begin{tabular}{c} Affected\\ $uId$(s) \end{tabular}  & \begin{tabular}{c} Affected\\ by $uId$ \end{tabular}\\ 
\hline 
$0$ & $0$ & $N_{0.0} = \{\{0,1\}, \{2,3\}, \{4,5\}, \{6,7\}\}$ & $\lbrace 0 \rbrace$ & NA & NA & NA \\ 
\hline 

\multirow{2}{*}{$1$} & $1$ &  $N_{1.0} = \{\emptyset, \{4,5\}, \{0,1,2,3\}, \{6,7\}\}$ & $\lbrace1\rbrace$ & NA & NA & NA \\ 
\hhline{~------} 
& $2$ &  $N_{1.1} = \{\{0,1,2,3\}, \{6,7\}, \emptyset, \{4,5\}\}$ & $\lbrace1\rbrace$ & NA & NA & NA \\ 
\hline 

\multirow{4}{*}{$2$} & $3$ & $N_{2.0} = \{\emptyset, \{0,1,2,3\}, \{4,5\}, \{6,7\}\}$  & $\lbrace2\rbrace$ & NA & NA & NA \\ 
\hhline{~------} 
& $4$ & $N_{2.1} = \{\emptyset, \emptyset, \{0,1,2,3,6,7\}, \{4,5\}\}$ & $\lbrace2\rbrace$ & NA & NA & NA \\ 
\hhline{~------} 
 & $5$ &  $N_{2.2} = \{\{4,5\}, \{6,7\}, \emptyset, \{0,1,2,3\}\}$  & $\lbrace2\rbrace$ & NA & NA & NA \\ 
\hhline{~------} 
 & $6$ &  $N_{2.3} = \{\{0,1,2,3,6,7\}, \{4,5\}, \emptyset, \emptyset\}$  & $\lbrace2\rbrace$ & NA & NA & NA \\ 
\hline 

\multirow{12}{*}{$3$} & $1$ & $N_{3.0} \equiv N_{1.0} =\{\emptyset, \{4,5\}, \{0,1,2,3\}, \{6,7\}\}$ & $\lbrace1, 3\rbrace$ & Yes & $3$, $4$ & NA\\ 
& $3$ & $N_{2.0} = \{\emptyset, \{0,1,2,3\}, \{4,5\}, \{6,7\}\}$  & $\lbrace2, 4\rbrace$ & Yes & NA & $1$ \\ 
& $4$ & $N_{2.1} = \{\emptyset, \emptyset, \{0,1,2,3,6,7\}, \{4,5\}\}$ & $\lbrace2, 4\rbrace$ & Yes & NA & $1$ \\ 
\hhline{~------} 
 & $7$ & $N_{3.1} = \{\emptyset, \{0,1,2,3,6,7\}, \emptyset, \{4,5\}\}$  & $\lbrace3, 5\rbrace$ & NA & NA & $3$ \\ 
\hhline{~------} 
 & $8$ &  $N_{3.2} = \{\emptyset, \emptyset, \{4,5,6,7\}, \{0,1,2,3\}\}$ & $\lbrace3, 5\rbrace$ & NA & NA & $4$ \\ 
\hhline{~------} 
 & $9$ &  $N_{3.3} = \{\emptyset, \emptyset, \{0,1,2,3,4,5,6,7\}, \emptyset\}$  & $\lbrace3, 5\rbrace$ & NA & NA & $4$ \\ 
\hhline{~------} 
 & $2$ & $N_{3.4} \equiv N_{1.1} = \{\{0,1,2,3\}, \{6,7\}, \emptyset, \{4,5\}\}$ & $\lbrace1, 3\rbrace$ & Yes & $5$, $6$ & NA\\ 
 & $5$ &  $N_{2.2} = \{\{4,5\}, \{6,7\}, \emptyset, \{0,1,2,3\}\}$  & $\lbrace2, 4\rbrace$ & Yes & NA & $2$ \\ 
 & $6$ &  $N_{2.3} = \{\{0,1,2,3,6,7\}, \{4,5\}, \emptyset, \emptyset\}$  & $\lbrace2, 4\rbrace$ & Yes & NA & $2$ \\ 
\hhline{~------} 
 & $10$ & $N_{3.5} = \{ \emptyset, \{4,5\}, \emptyset,\{0,1,2,3,6,7\}\}$ & $\lbrace3, 5\rbrace$ & NA & NA & $5$ \\ 
\hhline{~------} 
 & $11$ & $N_{3.6} = \{\{4,5,6,7\}, \{0,1,2,3\}, \emptyset, \emptyset\}$ & $\lbrace3, 5\rbrace$ & NA & NA & $6$\\ 
\hhline{~------} 
 & $12$ & $N_{3.7} = \{\{0,1,2,3,4,5,6,7\}, \emptyset, \emptyset, \emptyset\}$ & $\lbrace 3, 5\rbrace$ & NA & NA & $6$\\ 
\hline 

\multirow{12}{*}{$4$} & $13$ & $N_{4.2} = \{\emptyset, \{4,5,6,7\}, \emptyset, \{0,1,2,3\}\}$ & $\lbrace4, 6\rbrace$ & NA & NA & $7$\\ 
\hhline{~------} 
 & $14$ & $N_{4.3} = \{\emptyset, \{0,1,2,3,4,5,6,7\}, \emptyset, \emptyset\}$ & $\lbrace4, 6\rbrace$ & NA & NA & $7$\\ 
\hhline{~------} 
 & $4$ & $N_{4.4} \equiv N_{2.1} = \{\emptyset, \emptyset, \{0,1,2,3,6,7\}, \{4,5\}\}$ & $\lbrace2,4\rbrace$ & No & NA & NA\\ 
\hhline{~------} 
 & $15$ & $N_{4.5} = \{\emptyset, \emptyset, \{4,5\}, \{0,1,2,3,6,7\}\}$ & $\lbrace4, 6\rbrace$ & NA & NA & $8$\\ 
\hhline{~------} 
 & $9$ & $N_{4.6} \equiv N_{3.3} = \{\emptyset, \emptyset, \{0,1,2,3,4,5,6,7\}, \emptyset\}$ & $\lbrace3,4\rbrace$ & Yes & $9$ & $9$\\ 
 & $9$ &  $N_{4.7} \equiv N_{3.3} = \{\emptyset, \emptyset, \{0,1,2,3,4,5,6,7\}, \emptyset\}$ & $\lbrace3,4 \rbrace$ & No & NA & NA\\ 
\hhline{~------} 
 & $16$ &  $N_{4.10} = \{\emptyset, \{0,1,2,3\}, \emptyset, \{4,5,6,7\}\}$ & $\lbrace 4, 6\rbrace$ & NA & NA & $10$\\ 
\hhline{~------} 
 & $17$ & $N_{4.11} = \{ \emptyset, \emptyset, \emptyset, \{0,1,2,3,4,5,6,7\}\}$ & $\lbrace4, 6\rbrace$ & NA & NA & $10$\\ 
\hhline{~------} 
 & $6$ &  $N_{4.12} \equiv N_{2.3} = \{\{0,1,2,3,6,7\}, \{4,5\}, \emptyset, \emptyset\}$ & $\lbrace2,4\rbrace$ & No & NA & NA\\ 
\hhline{~------} 
 & $18$ & $N_{4.13} = \{\{4,5\}, \{0,1,2,3,6,7\}, \emptyset, \emptyset\}$ & $\lbrace4, 6\rbrace$ & NA & NA & $11$\\ 
\hhline{~------} 
 & $12$ & $N_{4.14} \equiv N_{3.7} = \{\{0,1,2,3,4,5,6,7\}, \emptyset, \emptyset, \emptyset\}$ & $\lbrace3,4\rbrace$ & Yes & $12$ & $12$\\ 
 & $12$ & $N_{4.15} \equiv N_{3.7} = \{\{0,1,2,3,4,5,6,7\}, \emptyset, \emptyset, \emptyset\}$ & $\lbrace3,4\rbrace$ & No & NA & NA \\ 
\hline 

\multirow{12}{*}{$5$} & $7$  &  $N_{5.4} \equiv N_{3.1} = \{\emptyset, \{0,1,2,3,6,7\}, \emptyset, \{4,5\}\}$  & $\lbrace3,5\rbrace$  & No & NA & NA\\ 
\hhline{~------}  
 & $10$  &  $N_{5.5} \equiv N_{3.5} = \{ \emptyset, \{4,5\}, \emptyset,\{0,1,2,3,6,7\}\}$ & $\lbrace3,5\rbrace$ & No & NA & NA \\ 
\hhline{~------} 
 & $14$  & $N_{5.6} \equiv N_{4.3} = \{\emptyset, \{0,1,2,3,4,5,6,7\}, \emptyset, \emptyset\}$ & $\lbrace4,5\rbrace$ & Yes & $14$ & $14$ \\ 
 & $14$  & $N_{5.7} \equiv N_{4.3} = \{\emptyset, \{0,1,2,3,4,5,6,7\}, \emptyset, \emptyset\}$ & $\lbrace4,5\rbrace$  & No & NA & NA\\ 
\hhline{~------} 
 & $19$  & $N_{5.10} = \{\emptyset, \emptyset, \{0,1,2,3\}, \{4,5,6,7\}\}$ & $\lbrace5, 7\rbrace$ & NA & NA & $15$ \\ 
\hhline{~------} 
 & $17$  & $N_{5.11} \equiv N_{4.11} = \{ \emptyset, \emptyset, \emptyset, \{0,1,2,3,4,5,6,7\}\}$ & $\lbrace4,5\rbrace$ & Yes & NA & NA \\ 
\hhline{~------} 
 & $10$  & $N_{5.20} \equiv N_{3.5} = \{ \emptyset, \{4,5\}, \emptyset,\{0,1,2,3,6,7\}\}$ & $\lbrace3,5\rbrace$ & No & NA & NA\\ 
\hhline{~------} 
 & $7$  & $N_{5.21} \equiv N_{3.1} = \{\emptyset, \{0,1,2,3,6,7\}, \emptyset, \{4,5\}\}$ & $\lbrace 3,5\rbrace$ & No & NA & NA \\ 
\hhline{~------} 
 & $17$  & $N_{5.22} \equiv N_{4.11} = \{ \emptyset, \emptyset, \emptyset, \{0,1,2,3,4,5,6,7\}\}$ & $\lbrace4,5\rbrace$ & No & NA & NA \\ 
\hhline{~------} 
 & $17$  & $N_{5.23} \equiv N_{4.11} = \{ \emptyset, \emptyset, \emptyset, \{0,1,2,3,4,5,6,7\}\}$ & $\lbrace4,5\rbrace$ & No & NA & NA \\ 
\hhline{~------} 
 & $20$  & $N_{5.26} = \{\{0,1,2,3\}, \{4,5,6,7\}, \emptyset, \emptyset\}$  & $\lbrace5, 7\rbrace$ & NA & NA & $18$ \\ 
\hhline{~------} 
 & $14$  & $N_{5.27} \equiv N_{4.3} = \{\emptyset, \{0,1,2,3,4,5,6,7\}, \emptyset, \emptyset\}$ & $\lbrace4,5\rbrace$ & No & NA & NA \\ 
\hline 
 
\multirow{4}{*}{$6$}  &  $15$ &  $N_{6.20} \equiv N_{4.5} = \{\emptyset, \emptyset, \{4,5\}, \{0,1,2,3,6,7\}\}$  & $\lbrace4,6\rbrace$ & No & NA & NA\\ 
\hhline{~------}  
 &  $4$ & $N_{6.21} \equiv N_{2.1} = \{\emptyset, \emptyset, \{0,1,2,3,6,7\}, \{4,5\}\}$ & $\{2,4\}$ & No & NA & NA \\ 
\hhline{~------}  
 &  $18$ & $N_{6.52} \equiv N_{4.13} = \{\{4,5\}, \{0,1,2,3,6,7\}, \emptyset, \emptyset\}$ & $\{4,6\}$ & No & NA & NA \\ 
\hhline{~------}   
 &  $6$ & $N_{6.53} \equiv N_{2.3} = \{\{0,1,2,3,6,7\}, \{4,5\}, \emptyset, \emptyset\}$ & $\{2,4\}$ & No & NA & NA \\ 
\hline 
\end{tabular}
}
\end{table}

From Table~\ref{ruleEx1}, it can be seen that, at level $2$, all nodes are unique and added to \emph{NodeList}. At level $3$, however, $N_{3.0} \equiv N_{1.0}$ and $N_{3.4} \equiv N_{1.1}$; these two loops are valid and accordingly \emph{NodeLevel} of $6$ existing nodes are updated. Moreover, $6$ new unique nodes are also added in this level. As reversibility conditions are sustained for all these nodes, so, the algorithm proceeds to the next level.

At level $4$ also, $6$ unique nodes are added to \emph{NodeList}. As, each of these nodes has multiple levels in their \emph{NodeLevel}, so, each is verified for the reversibility conditions at levels $n-2$ and $n-1$. Among the duplicate nodes, new loops for nodes $N_{2.1}$ and $N_{2.3}$ are not relevant, so, \emph{NodeLevel[$4$]} and \emph{NodeLevel[$6$]} remain unchanged. But, \emph{NodeLevel[$9$]} and \emph{NodeLevel[$12$]} are updated with levels of their new loop value. These nodes have no sub-tree to update. $N_{3.3}$ and $N_{3.7}$ also assert reversibility conditions, so, the algorithm continues to move forward.

At the next level, only $2$ unique nodes are added to \emph{NodeList}. However, among the duplicate nodes only $N_{4.3}$ and $N_{4.11}$ have updated their \emph{NodeLevel}. 

At level $6$, no new unique node is generated, as well as, no new relevant loop is found for the duplicate nodes. So, the algorithm jumps to \ref{st8}. The minimized tree for the CA is shown in Figure~\ref{fig:rt3}. The tree has only $21$ nodes, that is, number of unique nodes generated by the algorithm ($M$) is $21$. For every loop of Figure~\ref{fig:rt3}, the corresponding nodes satisfy reversibility conditions, so, the CA is declared as reversible for $n=1001$.

\end{example}

\begin{example}
Let us take a $3$-state CA $102012120012102120102102120$ with $n = 555$ as input. Note that this CA is also balanced, so the root $N_{0.0}$ is added to \emph{NodeList} and $0$ is added to \emph{NodeLevel[$0$]}. Execution of Algorithm~\ref{rev_algo} for this CA is shown in Table~\ref{ruleEx2}.

\begin{table}[hbtp]
\renewcommand{\arraystretch}{1.45}
\caption{Execution of Algorithm~\ref{rev_algo} for $3$-state CA $102012120012102120102102120$ with $n = 555$}
\label{ruleEx2}
\centering
\begin{adjustbox}{width=1.0\textwidth}
\small
\begin{tabular}{|c|c|p{11cm}|c|}
	\hline
	\multicolumn{1}{|c}{$i$}&
	\multicolumn{1}{|c}{$uId$}&
	\multicolumn{1}{|c|}{$NodeList[uId]$}&
	\multicolumn{1}{|c|}{$NodeLevel$}\\\hline
$0$ & $0$ & $N_{0.0} = \{\{0-2\}, \{3-5\}, \{6-8\}, \{9-11\}, \{12-14\}, \{15-17\}, \{18-20\}, \{21-23\}, \{24-26\}\}$ & $\lbrace 0 \rbrace$ \\ 
\hline 

\multirow{3}{*}{$1$} & $1$ &  $N_{1.0} =  \{\{0-2\}, \{ 12-14\}, \{21-23 \}, \{ 0-2 \}, \{12-14\}, \{24-26\}, \{0-2\}, \{15-17\}, \{21-23\}\}$ & $\lbrace 1 \rbrace$ \\ 
\hhline{~---} 
& $2$ &  $N_{1.1} =  \{\{6-8\}, \{15-17\}, \{24-26\}, \{6-8\}, \{15-17\}, \{21-23\}, \{6-8\}, \{12-14\}, \{24-26\}\}$ & $\lbrace1\rbrace$ \\ 
\hhline{~---} 
& $3$ &  $N_{1.2} =  \{\{3-5\}, \{9-11\}, \{18-20\}, \{3-5\}, \{9-11\}, \{18-20\}, \{3-5\}, \{9-11\}, \{18-20\}\}$ & $\lbrace1\rbrace$ \\ 
\hline 

\multirow{9}{*}{$2$} & $4$ &  $N_{2.0} =  \{\{0-2\}, \{ 12-14\}, \{15-17 \}, \{ 0-2 \}, \{12-14\}, \{21-23\}, \{0-2\}, \{24-26\}, \{15-17\}\}$ & $\lbrace 2 \rbrace$ \\ 
\hhline{~---} 
 &  $5$ &  $N_{2.1} =  \{\{6-8\}, \{15-17\}, \{ 12-14\}, \{6-8\}, \{15-17\}, \{24-26\}, \{6-8\}, \{21-23\}, \{12-14\}\}$ & $\lbrace 2 \rbrace$ \\  
\hhline{~---} 
 &  $6$ &  $N_{2.2} =  \{\{3-5\}, \{9-11\}, \{9-11\}, \{3-5\}, \{9-11\}, \{18-20\}, \{3-5\}, \{18-20\}, \{9-11\}\}$ & $\lbrace 2 \rbrace$ \\ 
\hhline{~---} 
 &  $7$ &  $N_{2.3} =  \{\{21-23\}, \{24-26\}, \{21-23\}, \{21-23\}, \{24-26\}, \{15-17\}, \{21-23 \},  \{12-14\}, \{21-23 \}\}$ & $\lbrace 2 \rbrace$ \\  
\hhline{~---} 
 &  $8$ &  $N_{2.4} =  \{ \{24-26\}, \{21-23\}, \{24-26\}, \{24-26\}, \{21-23\}, \{12-14\}, \{24-26\}, \{15-17\}, \{24-26\}\}$ & $\lbrace 2 \rbrace$ \\  
\hhline{~---} 
 &  $9$ &  $N_{2.5} =  \{\{18-20\}, \{18-20\}, \{18-20\}, \{18-20\}, \{18-20\}, \{9-11\}, \{18-20\}, \{9-11\}, \{18-20\}\}$ & $\lbrace 2 \rbrace$ \\  
\hhline{~---} 
 &  $10$ &  $N_{2.6} =  \{\{ 12-14\}, \{0-2\}, \{ 0-2 \}, \{ 12-14\}, \{0-2\}, \{ 0-2 \}, \{ 12-14\}, \{0-2\}, \{ 0-2 \}\}$ & $\lbrace 2 \rbrace$ \\  
\hhline{~---}  
 &  $11$ &  $N_{2.7} =  \{\{15-17\}, \{6-8\}, \{ 6-8\}, \{15-17\}, \{6-8\}, \{ 6-8\}, \{15-17\}, \{6-8\}, \{ 6-8\}\}$ & $\lbrace 2 \rbrace$ \\ 
\hhline{~---} 
 &  $12$ &  $N_{2.8} =  \{\{9-11\}, \{ 3-5\}, \{3-5\}, \{9-11\}, \{ 3-5\}, \{3-5\},\{9-11\}, \{ 3-5\}, \{3-5\}\}$ & $\lbrace 2 \rbrace$ \\ 
\hline 

\multirow{7}{*}{$3$} & $13$ &  $N_{3.0} =  \{\{0-2\}, \{ 12-14\}, \{24-26\}, \{ 0-2 \}, \{12-14\}, \{15-17\}, \{0-2\}, \{21-23\}, \{24-26\}\}$ & $\lbrace 3 \rbrace$ \\ 
\hhline{~---} 
& $14$ &  $N_{3.1} =  \{\{6-8\}, \{15-17 \}, \{21-23\}, \{6-8\}, \{15-17 \}, \{ 12-14\}, \{ 6-8\}, \{24-26\}, \{21-23\}\}$ & $\lbrace 3 \rbrace$ \\ 
\hhline{~---} 
& $15$ &  $N_{3.2} =  \{\{3-5\}, \{9-11\}, \{18-20 \}, \{3-5\}, \{9-11\}, \{9-11\}, \{3-5\}, \{18-20 \}, \{18-20 \}\}$ & $\lbrace 3 \rbrace$ \\ 
\hhline{~---} 
 & $16$ &  $N_{3.3} =  \{\{21-23\}, \{24-26\}, \{12-14\}, \{21-23\}, \{24-26\}, \{21-23\}, \{21-23\}, \{15-17\}, \{12-14\}\}$ & $\lbrace 3 \rbrace$ \\ 
\hhline{~---} 
& $17$ &  $N_{3.4} =  \{\{24-26\}, \{21-23\}, \{15-17 \}, \{24-26\}, \{21-23\}, \{24-26\}, \{24-26\}, \{12-14\}, \{15-17\}\}$ & $\lbrace 3 \rbrace$ \\ 
\hhline{~---}  
& $18$ &  $N_{3.5} =  \{\{18-20 \}, \{18-20 \}, \{9-11 \}, \{18-20 \}, \{18-20 \}, \{18-20 \}, \{18-20 \}, \{9-11 \}, \{9-11 \}\}$ & $\lbrace 3 \rbrace$ \\ 
\hhline{~---} 
 &$10$ &  $N_{3.6} \equiv N_{2.6} =  \{\{ 12-14\}, \{0-2\}, \{ 0-2 \}, \{ 12-14\}, \{0-2\}, \{ 0-2 \}, \{ 12-14\}, \{0-2\}, \{ 0-2 \}\}$ & $\lbrace 2,3 \rbrace$ \\
\hline 
\end{tabular} 
\end{adjustbox}
\end{table}

Following our algorithm, we get that, at level $1$ $3$ nodes $N_{1.0}$, $N_{1.1}$ and $N_{1.2}$ are added to \emph{NodeList} and level $1$ is added to their corresponding \emph{NodeLevel}. At level $2$ also, all $9$ nodes are unique and added to \emph{NodeList}. $uId$ is increased to $12$.

At the next level, $6$ consecutive nodes, from $N_{3.0}$ to $N_{3.5}$ are unique and added to \emph{NodeList} by increasing $uId$ to $18$. However, $N_{3.6} \equiv N_{2.6}$, so, level $3$ is added to \emph{NodeLevel[$10$]}, making a loop of length $1$. That means, this node is part of both the levels $n-2$ and $n-1$. But, after applying operation $\Gamma_{k}^{\mathcal{N'}} \leftarrow \Gamma_{k }^{\mathcal{N}} \cap \lbrace  \forall i, k + i \times d^2 ~|~ 0 \leq i \leq d-1 \rbrace, { 0 \leq k \leq d^2-1}$ (see Point~\ref{rtd6} of Definition~\ref{tree}), the node $\mathcal{N'}$ does not remain balanced; which implies, it fails to satisfy reversibility conditions for level $n-1$. The algorithm, therefore, stops further processing and declares the CA as irreversible for $n= 555$. Number of unique nodes generated by the algorithm for this CA is $M=19$.

\end{example}

\noindent \textbf{Complexity:} Although Algorithm~\ref{rev_algo} takes the cell length $n$ as input, its running time depends only on the unique nodes generated in the reachability tree (stored in \emph{NodeList}), which is a rule specific value. Let us consider the maximum number of unique nodes for the CA with number of cells $n$ is $M$. It may be mentioned here that, when $n$ is very small, $M$ increases with $n$. But, after a certain value of $n$, say $n_0$, the maximum number of unique nodes ($M$) possible in the reachability tree of a CA is independent of $n$, that is, when $n$ is not very small ($n>n_0)$, then $M$ does not depend on $n$. 

 So, execution time of the algorithm depends on \ref{st3}, where, for each node generated in the tree, first, it is checked whether the node is already present in \emph{NodeList} or not. If the node is already present, that is, a duplicate node, and the corresponding loop is a valid one, then, the level information of the loop is added to \emph{NodeLevel} of the matched node and levels of the whole sub-tree of that node are updated. The complexity of the algorithm depends on the total number of nodes visited / processed.

 According to the algorithm, total number of nodes generated for the construction of minimized tree is $d \times M$, as, for each node, $d$ number of children are generated. Now, for each node, the existing \emph{NodeList} is checked to find whether it is already present or not. If the node is unique, all the nodes of the \emph{Nodelist} are visited. But if it is duplicate, we stop at the matched index $k$ of \emph{NodeList} and update \emph{NodeLevel} of the nodes of the sub-tree of \emph{NodeList[$k$]}, which obviously is stored from index $k$ onwards in the \emph{NodeList}. So, at maximum, for this duplicate node, the total \emph{NodeList} is visited. 
 However, to check whether a node already exists in the \emph{NodeList}, first node can be visited $dM-1$ times, the second node $dM-2$ times and so on.  As all loops are not relevant and the \emph{NodeList} is updated gradually, total cost of visiting nodes of the \emph{NodeList} 
$< (dM-1) + (dM-2) + \cdots + (dM-M)= dM^2 - \frac{M^2+M}{2} $.
Hence, complexity of Algorithm~\ref{rev_algo} is $\mathcal{O}({dM^2})$.

\paragraph{Remark:}Complexity is generally measured in terms of input parameters. Here, the maximum number of possible nodes $N_{i.j}$ is bounded by $({2^{d^2}})^{d^2} = 2^{d^4}$ (any number of sibling RMT sets out of total $d^2$ number of sibling RMT sets to be selected and placed in any number of sets out of total $d^2$ sets). 
Again, for a specific $d$ and cell length $n$, the reachability tree can have at most $d^{n+1}$ number of nodes. Hence, we have the relation, $M < \min ({2^{d^4}, d^{n+1}})$. Note that, if $n$ is small, then $M$ is bounded by $d^{n+1}$, and if $n$ is large, then it is bounded by $2^{d^4}$. However, this is not a tight upper bound. Practically, $M$ is much less than $2^{d^4}$. We have showed the values of $M$ for different $d$-state rules in the tables \ref{tstg1}, \ref{tstg2} and \ref{tstg3}. For example, in Table~\ref{tstg1}, for the $3$-state CA rule $011101111102012000220220222$ with $n = 100001$, we observe $M = 910$ only, which is very very less than $2^{d^4} = 2^{81}$. It can also be observed that $M$ is rule dependent, and for a specific $d$, there is a sufficient lattice size $n_0$, after which no unique node is added for any $d$-state rule in the tree.
However, finding this tight upper bound of the sufficient lattice size $n_0$ is a future research problem.

\section{Identification of reversible Cellular Automata}
\label{identify}

This section reports the efficient ways of identifying a set of reversible CAs. 
One can, however, intuitively design the following straight forward approach to get a set of reversible finite CAs of length $n$ - consider a set of CAs and then use our algorithm to select reversible CAs from the set. This trial-and-error approach is not practical, because total number of rules for $d$-state CAs is $d^{d^3}$ and most of the CAs are irreversible. So, it is very difficult to identify a number of reversible CAs.

Instead of considering  a set of arbitrary CAs, one can repeat the above procedure with balanced rules only, because unbalanced rules are always irreversible CAs (Theorem~\ref{revth4}). However, the number of balanced $d$-state CA rules is $ \frac{d^3!}{(d^2!)^d } $ (the total number of arrangements of $d^3$ RMTs where $d$ groups of RMTs have same next state value with each group size $d^2$ (Definition~\ref{balancedrule})), and the ratio of the balanced rules to total number of rules is $\frac{d^3!}{{{(d^2!)}^d}\times{d^{d^3}}}$. This ratio is quite little - for $3$-state CAs, it is $\approx 3\%$, for $4$-state CAs $\approx 0.2\%$ and for $5$-state CAs, it is $\approx 0.009\%$.
Even if we take only balanced rules, we find that most of the balanced rules are irreversible! 
To get a feel about how many balanced rules are reversible, we have arranged an experimentation where we have randomly generated one hundred million balanced rules for $3$-state CAs and tested reversibility of those CAs by Algorithm~\ref{rev_algo} with random cell length $n$.
And, we have observed that there are only three reversible CAs! A sample result of this experiment is given in Table~\ref{randomRules}. In this table, first column shows the cell length $n$ and the second column shows the rule. Here, both are generated randomly. However, column $3$ of Table~\ref{randomRules} notes the number of unique nodes generated before deciding the CA as reversible/irreversible; whereas column $4$ shows the level of the last unique node.
Therefore, arbitrary choosing of balanced rules for testing reversibility is not very helpful. In this scenario, we take greedy approach to choose the balanced rules which are potential candidates to be reversible.

\begin{table}[hbtp]
	\begin{center}
		\caption{Sample of randomly generated balanced rules for $3$-state CAs}
		\label{randomRules}
		\resizebox{0.96\textwidth}{!}{
			\begin{tabular}{|c|c|c|c|c|}
				\hline 
				 $n$ & Rule & $M$ & Last Level & Reversible? \\ 
				\hline
				$180$ & $000102212200012012112121201$ & $1$ & $0$ & No \\
				\hline 
				$583$ & $011220101212120121202201000$ & $2$ & $1$ & No \\
				\hline 
				$636$ & $201212101021200020010212112$ & $2$ & $1$ & No \\
				\hline 
				$966$ & $120201201201120021012210210$ & $3280$ & $7$ & No \\
				\hline 
				$669$ & $102201121210021102202010021$ & $2$ & $1$ & No \\
				\hline 
				$888$ & $102200220002122010110211121$ & $2$ & $1$ & No \\
				\hline 
				$563$ & $001002120120210201221112021$ & $7$ & $2$ & No \\
				\hline
				$387$ & $102002202110121010102012221$ & $2$ & $1$ & No \\
				\hline 
				$13$ & $021022210022012201110110102$ & $4$ & $1$ & No \\
				\hline 
				$36$& $120210201102021210021102120 $ & $3273$ & $7$ & No\\
				\hline
				$946$ & $201022222121010111100202001$ & $1$ & $0$ & No \\
				\hline 
				$264$ & $000222202112020110112001121$ & $7$ & $2$ & No \\
				\hline 
				$467$ & $122010020120002012111221201 $ & $1$ & $0$ & No\\
				\hline
				$837$ & $210001111021121200222202010$ & $1$ & $0$ & No \\
				\hline 
				$162$ & $111212010002002122210210120$ & $4$ & $1$ & No \\
				\hline 
				$247$ & $102012201021020210121120120$ & $5$ & $2$ & No \\
				\hline 
				$931$ & $122011222011100101200222010 $ & $103$ & $4$ & No\\
				\hline
				$932$ & $010101212210201122012201020$ & $1$ & $0$ & No \\
				\hline 
				$277$ & $100001212012122010101102222$ & $1$ & $0$ & No \\
				\hline
				$221$ & $110202112021021220202110001 $ & $242$ & $5$ & No\\
				\hline
				$953$ & $012012120100201221210120201$ & $7$ & $2$ & No \\
				\hline
				$467$ & $212122221120210112001001000$ & $1041$ & $15$ & Yes\\
				\hline
				$939$ & $210102210120120120012201210$ & $109$ & $4$ & No \\
				\hline
				$753$ & $212201110121000102222020110$ & $1$ & $0$ & No \\
				\hline
				$282$ & $012222110210120100002210211$ & $1$ & $0$ & No \\
				\hline 
				$413$ & $012210101222011120120102002$ & $4$ & $1$ & No \\
				\hline
				$56$ & $112201202210012122001010102$ & $4$ & $1$ & No \\
				\hline
				$533$ & $111201011222022220000110102$ & $109$ & $4$ & No \\
				\hline
				$493$ & $120001012212021220010211021$ & $2$ & $1$ & No \\
				\hline
				$222$ & $110022220211121012000112020$ & $1$ & $0$ & No\\ 
				\hline 
				$251$ & $220101112021212100202200101 $ & $2$ & $1$ & No \\
				\hline	
				$991$ & $112020220210102010111020221$ & $1$ & $0$ & No \\
				\hline 
				$152$ & $102200212210001212211100012$ & $2$ & $1$ & No \\
				\hline
				$906$ & $110210101022022202112001102$ & $1$ & $0$ & No \\
				\hline 
				$641$ & $201200102101020222122011011$ & $1$ & $0$ & No \\
				\hline 
				$444$ & $000211122001210122001110222$ & $44$ & $4$ & No \\
				\hline 
				$927$ & $021211112020012011202220010$ & $2$ & $1$ & No \\
				\hline 
				$177$ & $020120101210122111222021000$ & $1$ & $0$ & No \\
				\hline 
				$297$ & $120012021102102210021120102$ & $364$ & $5$ & No \\
				\hline 
				$96$ & $100022110201211022201022110$ & $20$ & $3$ & No \\
				\hline 
				$728$ & $211011002111222022012200001$ & $1$ & $0$ & No \\
				\hline 
				$956$ & $112122000000122112222111000$ & $47$ & $4$ & No \\
				\hline 
				$59$ & $121202101120200010021221102$ & $2$ & $1$ & No \\
				\hline 
				$505$ & $011101021021200212120212002$ & $5$ & $2$ & No \\
				\hline 
				$299$ & $110011110021200201022221220$ & $4$ & $1$ & No \\
				\hline 
				$476$ & $120221022012101120020001112$ & $2$ & $1$ & No \\
				\hline
				$816$ & $210212110101012200002101222$ & $1$ & $0$ & No \\
				\hline 
				$17$ & $120012012012120102201201210$ & $9837$ & $8$ & No \\
				\hline 
				$898$ & $010102102020121021111022220 $ & $1$ & $0$ & No \\
				\hline 
				$611$ & $122210112212201021000011020 $ & $1$ & $0$ & No \\
				\hline
				$183$ & $021101012210201202210012210$ & $8$ & $2$ & No \\
				\hline
			\end{tabular}
			 }
		\end{center}
	\end{table}

It is pointed out in Section~\ref{rev} that nodes of a reachability tree of a reversible CA are balanced (see Definition~\ref{balancednode} and Lemma~\ref{revcor2}). If a rule is balanced, the root which contains all RMTs of the rule, is also balanced. Our greedy approach is, choose the balanced rules in such a way that all the nodes up to level $n-3$ also remain balanced. Then, use Algorithm~\ref{rev_algo} to test reversibility of the selected balanced rules. Success of this scheme, however, remains on how efficiently we are choosing the balanced rules.

We observe that the equivalent RMTs result in a same set of (sibling) RMTs at next level (see Section~\ref{CAbasic}). For example, in a $3$-state CA, RMT $0$ and RMT $9$ are equivalent to each other and both of them produce RMTs $0, 1$ and $2$ in next level (see Table~\ref{rln}). We exploit this property to develop our first greedy strategy. 
Let us recall that, $Equi_i = \{i, d^2+i, 2d^2+i,..., (d-1)d^2+i\}$ is a set of equivalent RMTs where $0 \leq i \leq d^2-1$. However, our first strategy of rule selection is -

\noindent \textbf{STRATEGY I :} \label{stg1}\textit{ Pick up the balanced rules in which equivalent RMTs have different next state values, that is, no two RMTs of $Equi_i ~(0 \leq i \leq d^2-1)$ have same next state value.}

If we follow STRATEGY $I$, the label of an edge incident to the root, contains exactly one RMT from $Equi_i$, for any $i$ $(0 \leq i \leq d^2-1)$. This implies, the nodes of level $1$ contain all the $d^3$ RMTs of the rule. 
Hence, the nodes of level $2$ also contain all the $d^3$ RMTs of the rule. This scenario continues until level $n-3$. However, we use our algorithm to test whether this scenario continues further for levels $n-2$ and $n-1$, that is, whether the CA is reversible or not. 
These types of CAs, however, are vibrant candidates to be reversible. 

To observe the effectiveness of this strategy, we have randomly generated $d$-state CA rules applying STRATEGY $I$. The cell lengths are also chosen arbitrarily. Now we get a good number of reversible CAs. Table~\ref{tstg1} gives a few examples of this experimentation.
In Table~\ref{tstg1}, first column represents number of states ($d$), second column the number of cells ($n$), third column the CA rule, fourth column represents the number of unique nodes ($M$) generated by Algorithm~\ref{rev_algo} for the CA with $n$ cells and the fifth column represents the level of the last unique node. The result of reversibility test by the algorithm is shown in the sixth column. 
In a sample run, out of one hundred million randomly generated balanced rules following this strategy for $3$-state CAs, we have found more than $1.5\times {10}^{5}$ rules which are reversible by Algorithm~\ref{rev_algo} for arbitrary cell length $n$.

It can also be noted that, for each of the rules of Table~\ref{tstg1}, although the number of cells $n$ given as an input is a large number, but the number of levels up to which unique nodes are generated for that tree is relatively very small and is independent of $n$. For example, for the $3$-state CA rule $2221221220010010001\\10210211$, $M= 585$ and last unique is added in level $11$, although $n$ was taken as $10005$ and the CA is reported as reversible. However, when $n$ is taken as $1000000$, for the same CA the algorithm generates the tree for level $3$ only with $39$ unique nodes and detects it as an irreversible CA. Even if we change the value of $n$, such that $n \geq 11$, $M$ becomes either $585$ or $39$ depending on whether the CA is reversible or irreversible. So, although $n$ can be very large, but Algorithm~\ref{rev_algo} generates the reachability tree for this CA up to maximum level $11$ only.
It is observed through experiment that, for $3$-state reversible CAs, maximum number of unique nodes generated by the algorithm is $1371$ and the last unique node is generated in level $i = 19$.  

\begin{table}[h]
\begin{center}
\caption{Sample rules of STRATEGY I}
\label{tstg1}
\resizebox{0.99\textwidth}{!}{
\begin{tabular}{|c|c|c|c|c|c|}
\hline 
$d$ & $n$ & Rule & $M$ & Last Level & Reversible? \\ 
\hline 
$2$ & $ 1001 $ & $ 01001011 $ & $ 21 $ & $ 5 $ & Yes \\ 
\hline 
$ 2 $ & $ 2000 $ & $ 01111000 $ & $ 7 $ & $ 2 $ & No \\ 
\hline 
\multirow{2}{*}{$ 3 $} & $ 2090 $ &\multirow{2}{*}{$ 001101211110010120222222002 $}  & $ 39 $ & $  3 $ & No \\ 
 \hhline{~-~---}
 & $ 2091 $ & & $ 282 $ & $ 9 $ & Yes \\ 
\hline 
\multirow{2}{*}{$ 3 $} & $ 10005 $ & \multirow{2}{*}{$ 222122122001001000110210211 $} & $ 585 $ & $ 11 $& Yes \\ 
\hhline{~-~---}
 & $ 1000000 $ &  & $ 39 $ & $ 3 $ & No \\ 
\hline 
\multirow{2}{*}{$ 3 $} & $ 20000 $ & \multirow{2}{*} {$ 010211101020111202020222102 $} & $ 72 $ & $ 5 $ & No \\ 
\hhline{~-~---}
 & $ 100001 $ & & $ 92 $ & $ 8 $ & Yes \\ 
\hline 
$ 3 $ & $ 100001 $ & $ 222220222110111111001002000 $ & $ 88 $ & $ 6 $ & Yes \\ 
\hline 
\multirow{2}{*}{$ 3 $} & $ 25 $ & \multirow{2}{*}{$ 211212112020000020102121201 $} & $ 1371 $ & $ 19 $ & Yes \\ 
\hhline{~-~---}
 & $ 300 $ & & $ 163 $ & $ 5 $ & No \\ 
\hline
\multirow{2}{*}{$ 3 $} & $ 100001 $ & \multirow{2}{*}{$ 011101111102012000220220222 $} & $ 910 $ & $ 18 $ & Yes \\
\hhline{~-~---}
 & $ 101001 $ &  & $ 104 $ & $ 5 $ & No \\
\hline 
$ 3 $ & $ 101 $ & $ 210020002121111121002202210 $ & $ 1345 $ & $ 19 $ & Yes \\
\hline 
$ 3 $ & $ 25 $ & $ 112222111020000000201111222 $ & $ 114 $ & $ 7 $ & Yes \\
\hline
$ 3 $ & $ 25 $ & $ 201020222020201000112112111 $ & $ 580 $ & $ 14 $ & Yes \\
\hline
\multirow{2}{*}{$ 3 $} & $ 330 $ & \multirow{2}{*}{$ 121000212012122121200211000 $} & $ 122 $ & $ 5 $ & No\\
\hhline{~-~---}
 & $ 334 $ & & $ 269 $ & $ 8 $& Yes\\
\hline 
$ 3 $ & $ 101 $ & $ 122210111211121222000002000  $ & $ 194 $ & $ 10 $ & Yes \\
\hline
$ 3 $ & $ 103 $ & $ 021101110202222202110010021  $ & $ 1345 $ & $ 19 $ & Yes \\
\hline
$ 3 $ & $ 1000 $ & $ 201112201120020012012201120 $ & $ 1335 $ & $ 7 $ & No \\
\hline
$ 3 $ & $ 2551 $ & $ 111011011222220122000102200  $ & $ 252 $ & $ 9 $ & Yes \\
\hline
$ 3 $ & $ 2555 $ & $ 202121202020000020111212111 $ & $ 75 $ & $ 5 $ & Yes \\
\hline
$ 3 $ & $ 105 $ & $ 111211111202000020020122202 $ & $ 339 $ & $ 9 $ & Yes \\
\hline
$ 3 $ & $ 101 $ & $ 111111211222220002000002120  $ & $ 196 $ & $ 9 $ & Yes \\
\hline
\end{tabular} }
\end{center}
\end{table}

There are ${(d!)}^{d^2}$ balanced rules that can be selected as candidates following STRATEGY $I$. However, more rules can be selected as candidates if we look into sibling RMTs in similar fashion. Recall that, $Sibl_i = \{d.i, d.i+1, d.i+2, ..., d.i+d-1\}$ is a set of sibling RMTs, where $0 \leq i \leq d^2-1$.
It is directly followed from the definition of the reachability tree that all the RMTs of $Sibl_i$ are either present in a node or none of the RMTs is present, for any $i$. That is, no node in the tree (except the nodes of levels $n-2$ and $n-1$) partially contains the elements of any sibling set. Keeping in mind this property, we develop our next greedy strategy of rule selection -

\noindent\textbf{STRATEGY II :} \label{stg2} \textit{Pick up the balanced rules in which the RMTs of a sibling set have the different next state values, that is, no two RMTs of $Sibl_i ~(0 \leq i \leq d^2-1)$ have same next state value.}

If a rule is picked up following STRATEGY $II$, all the nodes except the nodes of level $n-2$ and $n-1$ are always balanced. There are ${(d!)}^{d^2}$ number of such balanced rules. These rules are also good candidates to be reversible CAs. Like previous, we use Algorithm~\ref{rev_algo} to finally decide which of this type of rules are reversible. 

Here also, we have experimented in the same way with randomly generated $d$-state rules and arbitrary cell length $n$. Some of the rules of this experimentation following STRATEGY $II$ are shown in Table~\ref{tstg2}. The column of this table are defined likewise the columns of Table~\ref{tstg1}. For this strategy, in a sample run of one hundred million randomly generated balanced $3$-state CA rules we have got more than $1.6 \times {10}^5$ reversible rules by applying Algorithm~\ref{rev_algo}.

We can observe that, here too, for each of the rules of Table~\ref{tstg2}, although the input $n$, that is the number of cells, is large, but the number of levels up to which unique nodes are generated for that tree is relatively very small and is independent of $n$. For example, for the $3$-state CA rule $1020121020121021020\\21021012$ with $n= 10001$, $M= 1371$ and last unique is added in level $19$. The CA is reported as reversible. Even if we change the value of $n$, such that $n \geq 19$, the Algorithm~\ref{rev_algo} generates the reachability tree for this CA up to maximum level $19$ only with $M \leq 1371$.
For STRATEGY $II$ also, it is observed that maximum number of unique nodes generated by the algorithm for $3$-state reversible CAs is $1371$ and the last unique node is generated in level $i = 19$.  

\begin{table}[h]
\begin{center}
\caption{Sample rules of STRATEGY II}
\label{tstg2}
\resizebox{0.99\textwidth}{!}{
\begin{tabular}{|c|c|c|c|c|c|}
\hline 
$d$ & $n$ & Rule & $M$ & Last level & Reversible? \\ 
\hline 
$ 2 $ & $ 1001 $ & $ 01011001 $ & $ 21 $ &  $ 5 $ & Yes \\ 
\hline 
$ 2 $ & $ 2000 $ & $ 10010101 $ & $ 13 $ &  $ 3 $ & No \\ 
\hline 
$ 3 $ & $ 25 $ & $ 021021021210210210012012012 $ & $ 229 $ & $ 9 $ & Yes \\ 
\hline 
\multirow{2}{*}{$ 3 $} & $ 25 $ & \multirow{2}{*}{$ 210210012201210021012210210 $} & $ 1315 $ & $ 17 $ & Yes \\ 
\hhline{~-~---} 
 & $ 30 $ &  & $ 138 $ & $ 5 $ & No \\ 
\hline
$ 3 $ & $ 10001 $ & $ 201120102201201201102120201 $ & $ 166 $ & $ 8 $ & Yes \\ 
\hline 
$ 3 $ & $ 25001 $ & $ 210012012120210021012012210 $ & $ 1345 $ & $ 19 $ & Yes \\ 
\hline 
\multirow{2}{*}{$ 3 $} & $ 25 $ & \multirow{2}{*}{$ 021120210021012021021021012 $} & $ 1039 $ & $ 17 $& Yes \\ 
\hhline{~-~---} 
& $ 300 $ & & $ 158 $ & $ 5 $ & No \\ 
\hline
$ 3 $ & $ 1001 $ & $ 021120120210120210120120021 $ & $ 910 $ & $ 18 $ & Yes \\
\hline
$ 3 $ & $ 100001 $ & $ 210012012201201210210201201 $ & $ 592 $ & $ 11 $ & Yes \\
\hline 
\multirow{2}{*}{$ 3 $} & $ 25 $ & \multirow{2}{*}{$ 201021012201201102021021201 $} & $ 382 $ & $ 9 $ & Yes \\ 
\hhline{~-~---}
& $ 20 $ &  & $ 49 $ & $ 4 $ & No \\ 
\hline 
$ 3 $ & $ 10001 $ & $ 102012102012102102021021012 $ & $ 1371 $ & $ 19 $ & Yes \\ 
\hline
$ 3 $ & $ 3333 $ & $ 120021120012021012021021120 $ & $ 192 $ & $ 10 $ & Yes \\
\hline 
$ 3 $ & $ 1000 $ & $ 210102102120201210102021021 $ & $ 716 $ & $ 6 $ & No \\
\hline
$ 3 $ & $ 25 $ & $ 012102012120210120210021102 $ & $ 1252 $ & $ 7 $ & No \\
\hline 
$ 3 $ & $ 3333 $ & $ 210210012201012102210210210 $ & $ 332 $ & $ 9 $ & Yes \\
\hline 
$ 3 $ & $ 101 $ & $ 201021201021201201012210201 $ & $ 985 $ & $ 17 $ & Yes \\
\hline 
$ 3 $ & $ 103 $ & $ 102102201210102012201102102 $ & $ 910 $ & $ 18 $ & Yes \\
\hline 
$ 3 $ & $ 331 $ & $ 102201012102102120102120102 $ & $ 1315 $ & $ 17 $ & Yes \\
\hline 
$ 3 $ & $ 3333 $ & $ 210012210012012210012012120 $ & $ 128 $ & $ 8 $ & Yes \\
\hline 
$ 3 $ & $ 103 $ & $ 012201021012012021012021012 $ & $ 196 $ & $ 9 $ & Yes \\
\hline 
$ 3 $ & $ 103 $ & $ 210210012201210102012210210 $ & $ 910 $ & $ 18 $ & Yes \\
\hline
$ 3 $ & $ 105 $ & $ 210120120102210012021120201  $ & $ 730 $ & $ 6 $ & No \\
\hline
\end{tabular} }
\end{center}
\end{table}

Therefore, if we select rules following STRATEGY $I$ and STRATEGY $II$, we will be able to identify a large set of $n$-cell reversible CAs. However, the set of rules, selected out of STRATEGY $I$ and the set of rules, selected out of STRATEGY $II$ are not disjoint.
We now report our $3^{rd}$ strategy of rule selection where the members of $Sibl_i ~(0\leq i \leq d^2-1)$ have same next state value.

\noindent\textbf{STRATEGY III :} \label{stg3} \textit{Pick up the balanced rules in which - $(1)$ the RMTs of $Sibl_i$ for each $i$, have same next state value, and either $(2)$ the RMTs of $Sibl_{k.d}, Sibl_{k.d + 1}, ..., Sibl_{k.d + d-1}$ have either same next state value, where $~0 \leq k \leq d-1$, or RMTs of those $d$ sets have different next state values; or $(3)$ RMTs of $Sibl_k$ and its equivalent RMTs have different next state values or RMTs of those $d$ sets have same next state value.}

There are $2(d! + (d!)^{d})$ number of balanced rules that can be selected following STRATEGY $III$ as candidates to be reversible CAs. One can easily verify that in this case also, all but nodes of level $n-2$ and level $n-1$ are balanced. 

For this strategy too, we have followed similar experiment on randomly generated $d$-state CA rules with arbitrary $n$. Some examples of such rules are shown in Table~\ref{tstg3}. Here also, the columns of Table~\ref{tstg3} are defined similar to the columns of Table~\ref{tstg1} and Table~\ref{tstg2}.

Note that, the CAs of STRATEGY $III$ are very simple CAs and whatever be the input $n$, for any CA Algorithm~\ref{rev_algo} generates the tree for a small length to detect whether the CA is reversible or irreversible. 
Number of rules following STRATEGY $III$ is less. However, in a sample run of one hundred randomly generated rules following this strategy for $3$-state CAs, we have found $22$ reversible rules by Algorithm~\ref{rev_algo}, where maximum number of unique nodes generated by the algorithm for these CAs is $28$ and the last unique node is generated in level $i = 4$.

\begin{table}[h]
\begin{center}
\caption{Sample rules of STRATEGY III}
\label{tstg3}
\resizebox{0.99\textwidth}{!}{
\begin{tabular}{|c|c|c|c|c|c|}
\hline 
$d$ & $n$ & Rule & $M$ & Last Level & Reversible? \\ 
\hline
$ 2 $ & $ 100 $ & $ 11001100 $ & $ 5 $ & $ 2 $ & Yes\\
\hline
\multirow{2}{*}{$ 3 $}& $ 199 $ &\multirow{2}{*}{$ 111222000222000222000111111 $}&$ 13 $&$ 2 $ & No\\
\hhline{~-~---}
&$ 200 $&&$ 13 $ & $ 2 $ & No\\
\hline
$ 3 $& $ 101 $ &$ 000111222111222000222000111 $& $ 19 $& $3$ & Yes\\
\hline
\multirow{2}{*}{$ 3 $}&$105$&\multirow{2}{*}{$000111222000222111222000111$}&$28$& $4$ & Yes\\
\hhline{~-~---}
&$1004$&&$15$& $3$ &No\\
\hline
$ 3 $&$ 25 $&$111222000000222111222111000$&$28$& $4$&Yes\\
\hline
$ 3 $ & $1000$ &$000222111000111222111000222$& $16$& $ 3 $& No \\
\hline
$ 3 $ & $103$ &$111222000222000111111000222$& $28$ & $4$ & Yes\\
\hline
$ 3 $&$ 1111 $ &$222111000111000222222000111$&$28$& $ 4 $ & Yes\\
\hline
$ 3 $&$10001$&$111222000222111000111222000$&$7$& $2$& No\\
\hline
$ 3 $&$ 5555 $&$111000222000222111222111000$&$19$&$ 3 $ & Yes\\
\hline
$ 3 $&$ 5555 $&$111222000111000222000111222$&$28$&$ 4 $ & Yes\\
\hline
\multirow{2}{*}{$ 3 $}&$10001$&\multirow{2}{*}{$222111000111000222000222111$}&$19$& $3$ &Yes\\
\hhline{~-~---}
&$ 10000 $&&$ 13 $&  $ 2 $& No\\
\hline
$ 3 $&$ 1111 $ &$222000111000111222111222000$&$19$& $ 3 $ & Yes\\
\hline
$ 3 $&$ 1001 $ &$000111222111222000000222111$&$28$& $ 4 $ & Yes\\
\hline
\end{tabular}} 
\end{center}
\end{table}

Therefore, we can identify a large set of reversible CAs after using above strategies and Algorithm~\ref{rev_algo}.

\section{Conclusion}
\label{conclusion}

In this work, we have developed reachability tree to test reversibility of $3$-neighborhood $d$-state periodic boundary finite CAs of length $n$. We have developed an algorithm which tests reversibility of a finite CA with a given length $n$. We have also reported three greedy strategies for finding a set of reversible $d$-state CAs of length $n$.

In future this work can be extended in the following directions:
\begin{itemize}

\item We have observed in Algorithm~\ref{rev_algo} (which tests reversibility of a finite CA of length $n$) that not much unique nodes in the reachability tree of a given $n$-cell CA are generated. In fact, after certain number of levels, no new nodes are generated. Since, number of levels and number of cells are related, this observation raises the following question - 
what is the tight upper bound of the necessary and sufficient lattice size $(n_0)$ to decide reversibility of a finite CA with any $n$? Efforts may be taken to answer this question. 

\item Though we have found a large number of reversible CAs, but the set is not exhaustive. By further exploration, some more rules can be found.

\item  This work can be further extended for $d$-state CAs ($d > 2$) under open boundary conditions.

\end{itemize}

\section*{Acknowledgments}
We gratefully acknowledge the anonymous reviewer for his/her suggestions which have helped the authors to improve the work.

This research is partially supported by Innovation in Science Pursuit for Inspired Research (INSPIRE) under Department of Science and Technology, Government of India.

\bibliography{Ref}
\end{document}